\title{Double Exponential Lower Bound for \textsc{Telephone Broadcast}}
\author{Prafullkumar Tale}{Indian Institute of Science Education and Research Bhopal, Bhopal, India \and \url{https://pptale.github.io/}}{prafullkumar@iiserb.ac.in}{https://orcid.org/0000-0001-9753-0523}{}
\authorrunning{Prafullkumar Tale} 
\keywords{Double Exponential Lower Bound, ETH-Based Lower Bound,
Telephone Broadcasting, NP-hard} 
\algnewcommand{\LeftComment}[1]{\Statex \(\triangleright\) #1}
\newcommand{\vc}{\mathtt{vc}}
\newcommand{\td}{\mathtt{td}}
\newcommand{\dist}{\mathsf{dist}}
\newcommand{\size}{\mathsf{size}}
\newcommand{\calA}{\mathcal{A}}
\newcommand{\calO}{\ensuremath{{\mathcal O}}}
\newcommand{\calT}{\mathcal{T}}
\newcommand{\calV}{\mathcal{V}}
\newcommand{\true}{\texttt{True}}
\newcommand{\false}{\texttt{False}}
\newcommand{\ETH}{\textsf{ETH}}
\newcommand{\hard}{\textsf{hard}}
\newcommand{\complete}{\textsf{complete}}
\newcommand{\para}{\textsf{para}}
\newcommand{\yes}{\textsc{Yes}}
\newcommand{\no}{\textsc{No}}
\newtheorem{reduction-rule}{Reduction Rule}
\newtheorem{branching-rule}{Branching Rule}
\newtheorem{branching-rule1}{Branching Rule}
\newtheorem{reduction rule}{Reduction Rule}
\newtheorem{preprocessing rule}{Preprocessing Rule}
\newtheorem{branching rule}{Branching Rule}
\newtheorem{marking-scheme}{Marking Scheme}
\algrenewcommand\algorithmicrequire{\textbf{Input:}}
\algrenewcommand\algorithmicensure{\textbf{Task:}}
\newcommand{\defproblem}[3]{
  \vspace{1mm}
\noindent\fbox{
  \begin{minipage}{0.96\textwidth}
  \begin{tabular*}{\textwidth}{@{\extracolsep{\fill}}lr} #1 \\ \end{tabular*}
  {\bf{Input:}} #2  \\
  {\bf{Question:}} #3
  \end{minipage}
  }
  \vspace{1mm}
}
\begin{document}

\maketitle

\begin{abstract}
Consider the \textsc{Telephone Broadcast} 
problem in which an input is a connected graph $G$ on $n$ vertices, 
a source vertex $s \in V(G)$, and a positive integer $t$.
The objective is to decide whether there is a broadcast protocol 
from $s$ that ensures that all the vertices of $G$ 
get the message in at most $t$ rounds.
We consider the broadcast protocol where, in a round, any node aware of the message can forward it to  at most one of its neighbors.
As the number of nodes aware of the message can at most double at each 
round, for a non-trivial instance we have $n \le 2^t$.
Hence, the brute force algorithm that checks all the permutations of
the vertices runs in time $2^{2^{\calO(t)}} \cdot n^{\calO(1)}$.
As our first result, we prove this simple algorithm is the best possible
in the following sense.
\begin{itemize}
\item \textsc{Telephone Broadcast} does not admit an algorithm running in
time $2^{2^{o(t)}} \cdot n^{\calO(1)}$, unless the \ETH\ fails.
\end{itemize}
To the best of our knowledge, this is only the fourth example
of \NP-Complete problem that admits a double exponential
lower bound when parameterized by the solution size.
It also resolves the question by 
Fomin, Fraigniaud, and Golovach [WG 2023].
In the same article, the authors asked whether the 
problem is \FPT\ when parameterized by 
the feedback vertex set number of the graph.
We answer this question in the negative.
\begin{itemize}
\item \textsc{Telephone Broadcast}, when restricted to graphs
of the feedback vertex number one, and hence
treewidth of two, is \NP-\complete.
\end{itemize}
We find this a relatively rare example of problems that admit
a polynomial-time algorithm on trees but is 
\NP-\complete\ on graphs of treewidth two.
\end{abstract}

\section{Introduction}
\label{sec:intro}

The aim of broadcasting in a network is to transmit a message 
from a given source vertex of the network to all the other vertices.
With the increasing interest in interconnection networks, 
there is considerable research dedicated to broadcasting
considering different models.
These models can have different attributes such as 
the number of sources, 
the number of vertices a particular vertex can forward message to 
in each round, 
the distances of each call,  
the number of destinations, etc., to name a few. 
In this paper, we consider the classical model of broadcasting~\cite{DBLP:journals/networks/HedetniemiHL88}
in which the process is split into discrete time steps and
at each time step, i.e., round, an informed vertex can forward the message 
to at most one of its uninformed neighbors.
There is a unique vertex, called \emph{source}, that has the 
message at the start of the process.
For a more detailed introduction to broadcasting, we refer the reader to
surveys like \cite{fraigniaud1994methods,hromkovivc1996dissemination}
and on recent PhD thesis~\cite{khanlari2023broadcasting}.

Formally, consider a connected simple graph $G$  
and let $s \in V(G)$ be the unique source of a message $M$.
At any given round, any node $u \in V(G)$ aware of $M$ can forward 
$M$ to at most one neighbor $v$ of $u$.
The minimum number of rounds for broadcasting a message 
from $s$ in $G$ to all other vertices is denoted by $b(G, s)$.
The problem of computing the broadcast time $b(G, s)$ for 
a given graph $G$ and a given source $s \in V(G)$ 
is \NP-\hard~\cite{DBLP:journals/siamcomp/SlaterCH81}. 
Also, the results of \cite{DBLP:journals/jacm/PapadimitriouY82}
imply that it is \NP-\complete\ to 
decide whether $b(G, s) \le t$ for graphs with 
$n = 2t$ vertices.
The problem is known to be \NP-\complete\ for restricted graph classes
such as $3$-regular planar graphs~\cite{DBLP:journals/ipl/Middendorf93}
and split graphs~\cite{DBLP:journals/tcs/JansenM95}.
Polynomial-time algorithms are known for trees \cite{DBLP:journals/siamcomp/SlaterCH81} and 
some classes of tree-like graphs 
\cite{DBLP:journals/jco/CevnikZ17,DBLP:journals/join/GholamiHM23,DBLP:journals/jco/HarutyunyanM08}. 

These hardness results motivate to study approximation, exact exponential
and parameterized algorithms for the problem.
For example, the polynomial time algorithm in \cite{DBLP:journals/siamdm/KortsarzP95} computes, for every graph $G$ and every source $s$, 
a broadcast protocol from $s$ performing  in 
$2\cdot b(G, s) + \calO(\sqrt{n})$ 
round, and hence, this algorithm has an approximation ratio of $2+o(1)$ 
for graphs with broadcast time $\gg \sqrt{n}$,
but $\tilde{\Theta}(\sqrt{n})$ in general.
Later, a series of papers tighten the approximation ratio from 
$\calO(\log^2 n / \log \log n)$ \cite{DBLP:conf/focs/Ravi94}, 
to $\calO(\log n)$ \cite{DBLP:journals/siamcomp/Bar-NoyGNS00}, 
and 
eventually $\calO(\log n/ \log \log n)$ \cite{DBLP:journals/jcss/ElkinK06}. 
For special graph classes, algorithms with better approximation results
are known~\cite{DBLP:conf/caldam/BhabakH15,DBLP:journals/join/BhabakH19,DBLP:conf/ciac/HarutyunyanH23}.

Fomin et al.~\cite{DBLP:conf/wg/FominFG23} presented an exact exponential
algorithm that runs in time $3^n \cdot n^{\calO(1)}$.
This improves upon a brute force algorithm that checks all the permutations
of the vertices where in each round, an informed vertex 
forwards the message to the first uninformed neighbor on its right. 
They also initiated the parameterized complexity study of the problem.
We mention the relevant definitions in Section~\ref{sec:prelims} but state
the formal definition of the problem before proceeding.

\defproblem{\textsc{Telephone Broadcast}}{A connected graph $G$ on $n$ vertices, a source vertex $s \in V(G)$, and a positive integer $t$.}{Does there exist a broadcasting protocol 
(where an informed vertex can forward the message to at most one of 
its uninformed neighbors) that starts from $s$ and informs all the vertices in time $t$?}

Fomin et al.~\cite{DBLP:conf/wg/FominFG23} observed that 
the problem admits a trivial kernel with $2^{t}$ vertices
when parameterized by the natural parameter, i.e.,  
the number of rounds $t$. 
To see this, note that the number of informed nodes 
can at most double at each round.
This implies that at most $2^t$ vertices can have received the message after 
$t$ communication rounds.
Hence, if $n > 2^t$, it is safe to conclude that it is a \no-instance.
Combined with the exact exponential algorithms mentioned above,
this implies that the \textsc{Telephone Broadcast} admits
an algorithm running in time $2^{2^{\calO(t)}} \cdot n^{\calO(1)}$.
Fomin et al.~\cite{DBLP:conf/wg/FominFG23} explicitly asked
whether it is possible to obtain an algorithm with improved running time.
We answer this question in the negative.

\begin{theorem}
\label{thm:sol-size-lb}
Unless the \ETH\ fails,
\textsc{Telephone Broadcast} does not admit
an algorithm running in time $2^{2^{o(t)}} \cdot n^{\calO(1)}$.
\end{theorem}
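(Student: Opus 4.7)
The plan is to reduce from $3$-SAT. Under \ETH\ (combined with the Sparsification Lemma), $3$-SAT on $N$ variables and $M = \calO(N)$ clauses cannot be solved in time $2^{o(N+M)}$. I would build, in polynomial time, a \textsc{Telephone Broadcast} instance $(G, s, t)$ with $|V(G)| = (N+M)^{\calO(1)}$ and $t = \calO(\log(N+M))$ such that $b(G,s) \le t$ iff the formula is satisfiable. Any hypothetical algorithm of complexity $2^{2^{o(t)}} \cdot n^{\calO(1)}$ would then solve $3$-SAT in time $2^{2^{o(\log(N+M))}} \cdot (N+M)^{\calO(1)} = 2^{(N+M)^{o(1)}}$, contradicting \ETH. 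This budget is consistent with the bound $n \le 2^t$: we have $2^t = (N+M)^{\calO(1)}$, so the graph is allowed to be polynomial in the formula size.

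The heart of the argument is the gadget design. I would attach to $s$ a \emph{distribution tree} of depth $d = \calO(\log N)$ carrying $N$ distinguished leaves $\ell_1,\ldots,\ell_N$, one per variable. By padding each internal node with pendants of carefully chosen lengths, any protocol that meets the deadline is forced to route the message along the unique ``tight'' schedule that delivers it to each $\ell_i$ at a prescribed round $r_i$. At each $\ell_i$ I would attach a \emph{variable gadget} built from two long paths $P_i^{+}$ and $P_i^{-}$, arranged so that only one of the two can be initiated on time from $\ell_i$; the other path must eventually be informed via a backup route and arrives later. Thus any feasible protocol implicitly commits to a Boolean assignment $\alpha_i \in \{0,1\}$ per variable. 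For each clause $C_j$ I would then attach a \emph{clause gadget} wired to the three literal paths it contains, calibrated so that $C_j$ is informed by round $t$ iff at least one of its literals was fired on time, i.e., iff $\alpha$ satisfies $C_j$.

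The delicate part is enforcing \emph{tightness} throughout: every vertex of the construction must have a canonical ``on time'' round, and any deviation---skipping a call, delaying a call, or calling the ``wrong'' neighbor---must provably cascade into a missed deadline somewhere. The main obstacle will be reconciling two competing constraints: (i) routing the message from a single source to $\Omega(N)$ variable gadgets in only $\calO(\log N)$ rounds, which forces the distribution tree to be essentially a full binary tree with almost no slack, and (ii) letting each clause gadget collect information from three distinct literal paths while still being reached by round $t$, without giving the protocol enough freedom to ``cheat'' by delivering to the clause gadget through an unintended short-cut. I expect to handle (ii) by using disjoint literal-to-clause paths whose lengths are tuned so that the only way to be on time at the clause is through an on-time literal path. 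The overall template mirrors the one used for the handful of other $2^{2^{\Omega(k)}}$ lower bounds, most notably \textsc{Edge Clique Cover}: encode $N$ independent Boolean choices in the internal freedom of an object whose natural parameter is only $\calO(\log N)$, and let a polynomial-size verifier check consistency.
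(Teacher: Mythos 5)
Your plan follows essentially the same route as the paper's proof: reduce from a sparsified form of \textsc{$3$-SAT}, produce an instance with $t = \calO(\log(N+M))$ and polynomially many vertices, route the message through a logarithmic-depth complete binary tree rooted at $s$, use long ``deadline'' pendant paths to force every feasible protocol into an essentially unique tight schedule (the paper formalizes this via a downtime argument, Observation~\ref{obs:message-propelling}), read the truth assignment off the order in which a vertex forwards the message to two designated neighbors, and calibrate literal-to-clause path lengths so that a clause vertex is informed by round $t$ if and only if some literal of it fires on time. The only structural difference is minor: the paper encodes one variable per sibling pair at \emph{every} level of the tree (the forwarding order at each internal node is the bit), whereas you place all variables at the leaves of the distribution tree; both are workable.

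One concrete point your sketch glosses over, and which would break the construction as literally stated: you reduce from unrestricted \textsc{$3$-SAT} and propose ``disjoint literal-to-clause paths'' from each literal gadget to every clause containing that literal. Since a vertex can call only one neighbor per round, a literal endpoint that must launch $k$ disjoint clause paths needs $k$ consecutive rounds to do so, and in a schedule with only $\calO(1)$ slack at the end this fails as soon as some variable occurs $\omega(1)$ times. The paper avoids this by first passing to \textsc{$(3,3)$-SAT}, where each variable occurs at most three times and hence at most twice with each polarity, so each literal endpoint feeds at most two clause gadgets via a small fan (the $y$- and $z$-type vertices) reachable within two rounds of the $\beta$-type vertex. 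You need this occurrence bound (or an equivalent one; it costs only a linear blow-up and therefore preserves the \ETH\ lower bound) before your timing calibration can go through.
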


To the best of our knowledge, this is only the fourth such 
example of tight double-exponential lower bound when 
parameterized by the solution size. 
The other three examples are \textsc{Edge Clique Cover}~\cite{DBLP:journals/siamcomp/CyganPP16}, \textsc{BiClique Cover}~\cite{DBLP:conf/iwpec/ChandranIK16},
and
\textsc{Test Cover}~\cite{DBLP:journals/corr/abs-2402-08346}
\footnote{We mention a result in~\cite{DBLP:journals/algorithmica/AboulkerBKS23} which states that \textsc{Grundy Coloring} does not admit an 
algorithm running in time $f(k) \cdot n^{o(2^{k - \log k})}$, unless 
the \ETH\ fails.}.
As in these examples, a simple corollary of the above theorem is
that \textsc{Telephone Broadcast}
does not admit a kernelization algorithm that returns a kernel with 
$2^{o(t)}$ many vertices.

Towards the structural parameterization of the problem,
Fomin et al.~\cite{DBLP:conf/wg/FominFG23}
proved that the problem admits fixed-parameter tractable
algorithm when parameterized by 
the vertex cover number and the cyclomatic number of the input graph.
They mentioned that it is interesting to consider other structural parameterizations and, in particular, asked whether the problem
admits fixed-parameter tractable algorithms 
when parameterized by the feedback vertex number or treewidth, which are 
`smaller' parameters.
Our following result states that the problem is \para-\NP-\hard\ when 
parameterized by these parameters and hence highly unlikely
to admit such algorithms.

\begin{theorem}
\label{thm:fvs-np-hard}
\textsc{Telephone Broadcast}, when restricted to graphs
\begin{itemize}
\item with the feedback vertex set number one, and hence treewidth of two, and
\item with the pathwidth of three,
\end{itemize}
remains \NP-\complete.
\end{theorem}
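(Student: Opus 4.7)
Membership in \NP\ is immediate: a certificate is a round-by-round description of the protocol (for each round and each currently informed vertex, the neighbour it calls, or $\bot$), which is verifiable in polynomial time.

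For \NP-hardness on graphs with feedback vertex set number one, the plan is to reduce from the strongly \NP-hard problem \textsc{3-Partition}: given $3m$ positive integers $a_1,\ldots,a_{3m}$ with $\sum_i a_i = mB$ and $B/4 < a_i < B/2$, decide whether they partition into $m$ triples of sum exactly $B$. The graph $G$ is built around a single hub vertex $v^*$ that plays the dual role of being the unique feedback vertex and the bottleneck through which broadcasting decisions are funnelled. The source $s \ne v^*$ is connected to $v^*$ by several internally disjoint paths whose lengths are carefully tuned, so that the message reaches $v^*$ at a prescribed round regardless of the schedule at $s$. ``Item'' and ``bin'' gadgets, namely small trees whose broadcast times from their roots encode the values $a_i$ and the target $B$ respectively, are attached as pendants to $v^*$ and possibly to interior vertices of the $s$-to-$v^*$ paths. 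Removing $v^*$ leaves $s$ with a collection of pendant trees, which is itself a tree, so $G$ has feedback vertex set number $1$ and $\tw(G) \le 2$. The broadcast deadline $t$ is set so that meeting it requires $v^*$ to dispatch its item-gadget calls in groups whose sizes and total weights match the triples of a 3-partition.

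For correctness, the easy direction converts a valid 3-partition into an explicit protocol by using the partition to dictate the call order at $s$ and at $v^*$, and verifies the time bound by a direct computation. The hard direction analyses an arbitrary feasible protocol: here one exploits the tight ratio $B/4 < a_i < B/2$, which forces the items to group into batches of exactly three at $v^*$, and extracts a 3-partition from the observed schedule. The main obstacle I anticipate is to rule out protocols that exploit the multiple $s$-to-$v^*$ routes to parallelise item broadcasting in unintended ways; this will likely require a structural lemma bounding the effective throughput of $v^*$ regardless of which route first delivers the message, together with a careful choice of path lengths that leaves no slack for such tricks.

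For the pathwidth-$3$ refinement, the same combinatorial skeleton is used, but the inherently high-degree hub $v^*$ must be distributed. I would replace it by a chain of constant-degree hub vertices along a spine, attaching item and bin gadgets as pendants to successive spine vertices; the timing is preserved by padding gadgets to compensate for the extra spine-propagation delay. A path decomposition of width $3$ is then obtained by sweeping along the spine with bags containing the current spine vertex, its two spine neighbours, and at most one currently active gadget attachment. Correctness mirrors the FVS-$1$ case, with additional bookkeeping to account for the spine delays; the main subtlety is to argue that the delays do not allow the broadcast to ``amortise'' across spine vertices in a way that decouples the schedule from the 3-partition structure.
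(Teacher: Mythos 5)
Your proposal has a genuine gap at its core: the reduction's central mechanism is left unspecified, and the single-hub architecture you chose is unlikely to be able to supply it. With one hub $v^*$ and item/bin gadgets attached as pendant trees, once the message reaches $v^*$ (at a round you stipulate is fixed), what remains is broadcasting into a collection of pendant trees from one sequential dispatcher; if the gadgets are pendant paths encoding the $a_i$, feasibility becomes single-machine scheduling with unit dispatch times and per-item deadlines, which is solved greedily (earliest deadline first). Nothing in that structure forces items to aggregate into \emph{triples of total weight} $B$: a lone dispatcher constrains each item individually, never a sum over a group. You flag exactly this as ``the main obstacle I anticipate,'' but that obstacle \emph{is} the theorem --- without a mechanism that compares a sum of dispatch decisions against a target, there is no hardness to extract. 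The paper's reduction (from a tailored variant of \textsc{Numerical $3$-Dimensional Matching}) resolves this with \emph{two} hubs $s_x$ and $s_y$: each element $w_i$ of the first set becomes a path of length $T-\lambda-\size(w_i)-2$ between vertices $\alpha_i$ (adjacent to $s_x$) and $\beta_i$ (adjacent to $s_y$), so the path is filled from both ends and is completed in time if and only if $\size(x_j)+\size(y_k)\ge T-\lambda-\size(w_i)$, where $t-\size(x_j)$ and $t-\size(y_k)$ are the rounds at which $s_x$ and $s_y$ served $\alpha_i$ and $\beta_i$. Long dummy paths ($\gamma$- and $\delta$-type), pinned down via the downtime argument of Observation~\ref{obs:message-propelling}, occupy every other round of $s_x$ and $s_y$, so the only remaining freedom is which $X$-slot and which $Y$-slot get matched to each $W$-path --- exactly a numerical matching. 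This two-ended filling of a path by two synchronized dispatchers is the idea your sketch is missing, and it is also why the paper needs the technical ``(Almost) Matching'' variant in which all $X$-sizes and all $Y$-sizes are pairwise distinct (so that each free slot of $s_x$ or $s_y$ identifies a unique element).

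Two secondary points. First, your separate pathwidth-$3$ construction is unnecessary and risky: in the paper a single graph does both jobs, since deleting $s_x$ alone already destroys all cycles (feedback vertex number one), while deleting $\{s_x,s_y\}$ leaves a disjoint union of paths (distance to paths two, hence pathwidth at most three); your ``chain of constant-degree hubs'' would require a fresh correctness argument and is not obviously compatible with keeping the feedback vertex number at one. Second, your source gadget (``several internally disjoint $s$-to-$v^*$ paths so that the message reaches $v^*$ at a prescribed round regardless of the schedule'') is backwards: parallel paths cannot force an adversarial protocol to serve any particular one first; the paper instead uses a single short $s$--$s_x$ path plus the far-away $\gamma_1,\delta_1$ vertices to \emph{force} the order of $s$'s two calls. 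The NP-membership paragraph is fine.
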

This results prove polynomial vs \NP-\complete\ dichotomy separating treewidth one from treewidth two, 
which is a rare phenomenon and to the best of our knowledge, the only 
other examples are \textsc{Cutwidth}~\cite{DBLP:journals/jacm/Yannakakis85,DBLP:journals/tcs/MonienS88}, 
\textsc{L(2,1)-labeling}~\cite{DBLP:conf/icalp/FialaGK05}, 
\textsc{List L(1, 1)-Labeling}, \textsc{L(1, 1)-Prelabeling Extension}~\cite{DBLP:journals/tcs/FialaGK11}, and
\textsc{Minimum Sum Edge Coloring}~\cite{DBLP:journals/dam/Marx09}
\footnote{We are grateful to Dr. Michael Lampis and Dr. Christian Komusiewicz for pointing to these papers~\cite{stackexchange}}.
Our reduction holds even when the larger parameter
`distance to paths' is two, 
i.e., when the input is restricted to graphs in which deleting two 
vertices result in a collection of paths.
Moreover, it also imply the following result.

\begin{theorem}
\label{thm:tree-depth-eth}
Unless the \ETH\ fails,
\textsc{Telephone Broadcast} does not admit
an algorithm running in time $2^{2^{o(\td)}} \cdot n^{\calO(1)}$,
where $\td$ is the tree-depth of the input graph.
\end{theorem}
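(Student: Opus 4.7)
The plan is to derive Theorem~\ref{thm:tree-depth-eth} from the reduction underlying Theorem~\ref{thm:fvs-np-hard}. The hard instances of \textsc{Telephone Broadcast} produced there have pathwidth at most $3$, and any graph of pathwidth $p$ on $N$ vertices has tree-depth at most $(p+1) \cdot \lceil \log_2(N+1)\rceil$; in particular, those instances satisfy $\td = \calO(\log N)$. Assuming the reduction begins from an \ETH-hard problem such as 3-SAT on $n$ variables and produces a graph on $N = \operatorname{poly}(n)$ vertices, we get $\td = \calO(\log n)$ on the output side.

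Suppose, for contradiction, there were an algorithm solving \textsc{Telephone Broadcast} in time $2^{2^{o(\td)}} \cdot n^{\calO(1)}$. Applied to these reduced instances, it would run in time $2^{2^{o(\log n)}} \cdot n^{\calO(1)} = 2^{n^{o(1)}} \cdot n^{\calO(1)} = 2^{o(n)}$, solving 3-SAT in subexponential time and contradicting \ETH. This would establish Theorem~\ref{thm:tree-depth-eth}.

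The main obstacle is bookkeeping rather than structural: one needs to verify (i) that the reduction of Theorem~\ref{thm:fvs-np-hard} starts from an \ETH-hard problem whose parameter governs the \textsc{Telephone Broadcast} instance size only polynomially (so that $N = \operatorname{poly}(n)$), and (ii) that the pathwidth-to-tree-depth bound applies cleanly to the gadgeted graphs. If the gadget structure already exhibits a balanced separator hierarchy of depth $\calO(\log n)$---which is typical for path-like reductions---one can also avoid invoking the generic pathwidth-to-tree-depth conversion and exhibit an elimination forest of depth $\calO(\log n)$ directly.
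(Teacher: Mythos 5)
Your proposal is correct and follows essentially the same route as the paper: the paper also derives Theorem~\ref{thm:tree-depth-eth} from the Theorem~\ref{thm:fvs-np-hard} reduction, observing that deleting $s_x$ and $s_y$ leaves a disjoint union of paths, each of treedepth $\calO(\log q)$, so the constructed graph has treedepth $\calO(\log N)$ (i.e.\ the direct elimination-forest argument you mention as an alternative to the generic pathwidth-to-treedepth bound), and then reruns the ETH argument from the proof of Theorem~\ref{thm:sol-size-lb}. The bookkeeping point you flag does check out: the reduction starts from the strongly \NP-\complete\ \textsc{Numerical $3$-Dimensional (Almost) Matching} with all sizes bounded by $\operatorname{poly}(m)$, so $N = \operatorname{poly}(m)$ and a $2^{2^{o(\td)}}\cdot N^{\calO(1)} = 2^{m^{o(1)}}\cdot N^{\calO(1)}$ algorithm would yield a subexponential algorithm for \textsc{$3$-SAT} through the standard polynomial-blowup reduction chain, contradicting the \ETH.
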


Recently, Foucaud et al.~\cite{DBLP:journals/corr/abs-2307-08149}
proved that three distance-related \NP-\complete\
graph problems, viz. \textsc{Metric Dimension}, 
\textsc{Strong Metric Dimension}, \textsc{Geodetic Set}, 
admit such double exponential lower bounds
for parameters treewidth and vertex cover number.
Two follow up papers showed that problems from learning theory~\cite{DBLP:journals/corr/abs-2309-02876},
namely 
\textsc{Non-Clashing Teaching Map} and \textsc{Non-Clashing Teaching Dimension}, and local identification problems~\cite{DBLP:journals/corr/abs-2402-08346}, 
namely \textsc{Locating-Dominating Set} and \textsc{Test Cover} to admit similar lower bounds.

We organise the paper as follows:
Section~\ref{sec:prelims} contains necessary preliminaries and 
a simple but crucial observation (Observation~\ref{obs:message-propelling})
regarding \textsc{Telephone Broadcast}.
Section~\ref{sec:eth-lower-bound} contains proof of Theorem~\ref{thm:sol-size-lb} whereas Section~\ref{sec:np-hard-fvs} contains a reduction used
to prove both Theorem~\ref{thm:fvs-np-hard} and Theorem~\ref{thm:tree-depth-eth}.
We conclude the paper in Section~\ref{sec:conclusion} with an open problem.

\section{Preliminaries}
\label{sec:prelims}
For a positive integer $q$, we denote {the} set $\{1, 2, \dots, q\}$ by $[q]$.
We use $\mathbb{N}$ to denote the collection of all non-negative integers.

\subparagraph*{Parameterized complexity.} 
A \textit{parameterized problem} is a decision problem in which every 
instance $I$ is associated with a natural number 
$k$ called \textit{parameter}. 
A parameterized problem $\Pi$ is said to be 
\emph{fixed-parameter tractable} (\FPT) if every instance 
$(I,k)$ can be solved in $f(k)\cdot |I|^{\calO(1)}$ time where 
$f(\cdot)$ is some computable function whose value depends 
only on $k$. 
We say that two instances, $(I, k)$ and $(I’, k’)$, of a 
parameterized problem $\Pi$ are \emph{equivalent} 
if $(I, k) \in \Pi$ if and only if $(I’, k’) \in \Pi$. 
A parameterized problem $\Pi$ admits a kernel of size 
$g(k)$ (or $g(k)$-kernel) if there is a polynomial-time 
algorithm (called {\em kernelization algorithm}) 
which takes as an input $(I,k)$, and in time $(|I| + k)^{\calO(1)}$ returns
an equivalent instance $(I’,k’)$ of $\Pi$ such that $|I’| + k’ \leq g(k)$ 
where $g(\cdot)$ is a computable function whose value depends 
only on $k$. 

A parameterized problem $\Pi$ is said to be 
\emph{\para-\NP-\hard} parameterized by $k$ if 
the problem is \NP-\hard\ even for the constant value of 
parameter $k$.
The Exponential Time Hypothesis (\ETH), formalized in~\cite{DBLP:journals/jcss/ImpagliazzoP01},
roughly states that an arbitrary instance of $\psi$, with $n$ variables, of
\textsc{$3$-SAT} does not admit an algorithm running in time
$2^{o(n)}$.
For more details on parameterized complexity and \ETH, we refer the reader to the book by Cygan et al.~\cite{CyganFKLMPPS15}.

\subparagraph*{Graph Theory.} 
For an undirected graph $G$, sets $V(G)$ and $E(G)$ denote its set of vertices and edges, respectively.  
We denote an edge with two endpoints $u, v$ as $(u, v)$. 
Two vertices $u, v$ in $V(G)$ are \emph{adjacent} if there is an edge $(u, v)$ {in $G$}.
A \emph{simple path} from $v_1$ to $v_{d+1} (\neq v_1)$ is 
a non-empty sequence of vertices 
$\{v_1, v_{2}, \ldots, v_{d+1}\}$ such that all the 
vertices in this sequence are distinct and 
$(v_i, v_{i+1}) \in E(G)$ for every $i \in [d]$.
We define $\dist(v_1, v_{d+1})$ as the the number of edges
in the shortest path from $v_1$ to $v_{d+1}$.
A graph is called {\em connected} if there is a path between every pair of distinct vertices.
We refer to the book by Diestel~\cite{Diestel12} for standard graph-theoretic 
definitions and terminology not defined here.

A set of vertices $S$ of a graph $G$ is a \emph{vertex cover} if 
each edge of $G$ has at least one of its endpoints in $S$. 
The \emph{vertex cover number} of $G$ is the minimum size of a vertex cover. 
Note that for a vertex cover $S$, the set $I = V (G) \setminus S$ is 
an independent set, that is, any two distinct vertices of $I$ are not adjacent. 
A set of vertices $X$ of a graph $G$ is a \emph{feedback vertex set}
if $G - X$ is a forest, i.e., it does not contain any cycle. 
The \emph{feedback vertex set number} of $G$ is the minimum size of 
a feedback vertex set.
On similar lines, the \emph{distance to paths} of $G$ is
the minimum size of set $Y$ such that $G - Y$ is the collection of paths.
By definitions, the feedback vertex number of $G$ is less than or equal to its distance to paths, which is at most its vertex cover number.
We refer readers to~\cite[Chapter 7]{CyganFKLMPPS15}
for the definitions of pathwidth and treewidth.

Treedepth is a key invariant in the ‘sparsity theory’ for graphs 
initiated by Nesetril and Ossona de Mendez~\cite{DBLP:books/daglib/0030491}, with several algorithmic applications. 
It is defined as follows:
An \emph{elimination forest} of a graph $G$ is a rooted forest 
consisting of trees $T_1, \dots , T_p$ such that the sets
$V(T_1), \dots, V(T_p)$ partition the set $V(G)$ and for each edge 
$(x, y) \in E(G)$, the vertices $x$ and $y$ belong to one tree $T_i$ and 
in that tree, one of them is an ancestor of the other. 
The vertex-height of an elimination forest is the maximum number of vertices on a root-to-leaf path in any of its trees. 
The \emph{treedepth} of a graph $G$, denoted $\td(G)$, is the 
minimum vertex-height of an elimination forest of $G$.

\subparagraph*{Broadcasting in Graphs.} 
Let $G$ be a connected graph and let $s \in V (G)$ be a source vertex 
that broadcast a message. 
We say that a vertex is \emph{informed} if it has received
the messages and is \emph{uninformed} otherwise.
In general, a broadcasting protocol is a mapping that for each round 
$i \ge 1$, assigns to each vertex $v \in V (G)$ that is either a 
source or has received the message in rounds 
$1, \dots, i - 1$, a neighbor $u$ to which $v$ sends the message in 
the $i^{th}$ round.
We also refer to this as at time $i$, $u$ forwards the message to $v$. 
However, it is convenient to note that it can be assumed that each 
vertex $v$ that got the message, in the next $d \le \deg(v)$ rounds, 
transmits the message to some neighbors in a certain order in such 
a way that each vertex receives the message only once. 
This allows us to define a broadcasting protocol formally as a 
pair $(\calT,\{C(v) \mid v\in V(\calT)\})$, where 
$\calT$ is a spanning tree of $G$ rooted in $s$ and for each $v \in V(\calT)$, 
$C(v)$ is an ordered set of children of $v$ in $\calT$. 
As soon as $v$ gets the message, $v$ starts to send it to the children 
in $\calT$ in the order defined by $C(v)$. 

\begin{definition}[Message Propelling and Downtime]
Consider a path 
$P$ from $s$ to a vertex $v$.
We say the path \emph{propels the message} at time $t'$ if
there is an informed vertex $v_i$ who is the predecessor
of an uninformed vertex $v_{i+1}$ in the path,
and $v_i$ forwards the message to $v_{i+1}$. 
The \emph{downtime} of the path is the
number of occasions on which the 
path does not propel the message. 
\end{definition}

\begin{observation}
\label{obs:message-propelling}
Consider a \yes-instance $(G, s, t)$ of \textsc{Telephone Broadcast}.
Suppose there is a vertex $x \in V(G)$ such that $\dist(s, x) = t - \ell$
for a non-negative integer $\ell$.
Then, there is at least one path from $s$ to $x$ 
with downtime at most $\ell$.
\end{observation}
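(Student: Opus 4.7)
The plan is to produce an explicit witness path by extracting it from the broadcast protocol itself. Since $(G,s,t)$ is a \yes-instance, there is a broadcast protocol, which by the discussion preceding the observation can be encoded as a pair $(\calT, \{C(v) \mid v \in V(\calT)\})$, where $\calT$ is a spanning tree of $G$ rooted at $s$. I will take as my candidate the unique path $P$ from $s$ to $x$ inside $\calT$; since $\calT$ is spanning, $P$ is also a simple path from $s$ to $x$ in $G$.

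Next I will analyze how the message travels along $P$. Write $d' = |P|$ for the number of edges of $P$. Because the message travels down $\calT$ from $s$, it reaches $x$ exactly by traversing the edges of $P$, and each edge of $P$ is used in exactly one round of the protocol. These $d'$ rounds are distinct and all lie in $\{1,\ldots,t\}$, because the protocol completes within $t$ rounds. Hence in the remaining $t - d'$ rounds no edge of $P$ propels the message, which is precisely the downtime of $P$.

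Finally I will combine this with the distance bound. Since $P$ is a walk from $s$ to $x$ in $G$, we have $d' \geq \dist(s,x) = t-\ell$, and so the downtime of $P$ is $t - d' \leq t - (t-\ell) = \ell$. This exhibits the desired path with downtime at most $\ell$, completing the proof.

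I do not expect any real obstacle here; the only point that needs a careful sentence is the observation that every edge of $P$ is used exactly once as a message-forwarding step by the protocol, which follows directly from the fact that $\calT$ is the spanning tree of the protocol and each vertex of $\calT$ other than $s$ receives the message exactly once, along its parent edge.
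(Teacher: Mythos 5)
Your proposal is correct and follows essentially the same route as the paper: both take the unique $s$--$x$ path in the protocol's spanning tree $\calT$, observe that its downtime is the number of rounds in which it does not forward the message (the paper computes this as $t' - \dist_{\calT}(s,x)$ where $t'\le t$ is when $x$ is informed, you as $t - d'$; both are at most $t-\dist_{\calT}(s,x)$), and conclude via $\dist_G(s,x)\le\dist_{\calT}(s,x)$. No gap.
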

\begin{proof}
As $(G, s, t)$ is a \yes-instance of \textsc{Telephone Broadcast},
there exists a broadcasting protocol that can be represented
as a pair $(\calT,\{C(v) \mid v \in V(\calT)\})$, 
where $\calT$ is a spanning tree of 
$G$ rooted in $s$ and for each $v \in V(\calT)$, 
$C(v)$ is an ordered set of children of $v$ in $\calT$.
Suppose $x$ receives the message at time $t' \le t$.
Consider the unique path from $s$ to $x$ in $\calT$.
The downtime of this path is $t' - \dist_{\calT}(s, x)$,
 which is at most $t - \dist_{\calT}(s, x)$.
As $T$ is a spanning subtree of $G$, we have 
$\dist_G(s, x) \le \dist_{\calT}(s, x)$.
Hence, the downtime of the said path is at most
$t - \dist_G(s, x) = t - (t - \ell) = \ell$.
\end{proof}

\begin{figure}[t]
\centering
\includegraphics[scale=0.5]{./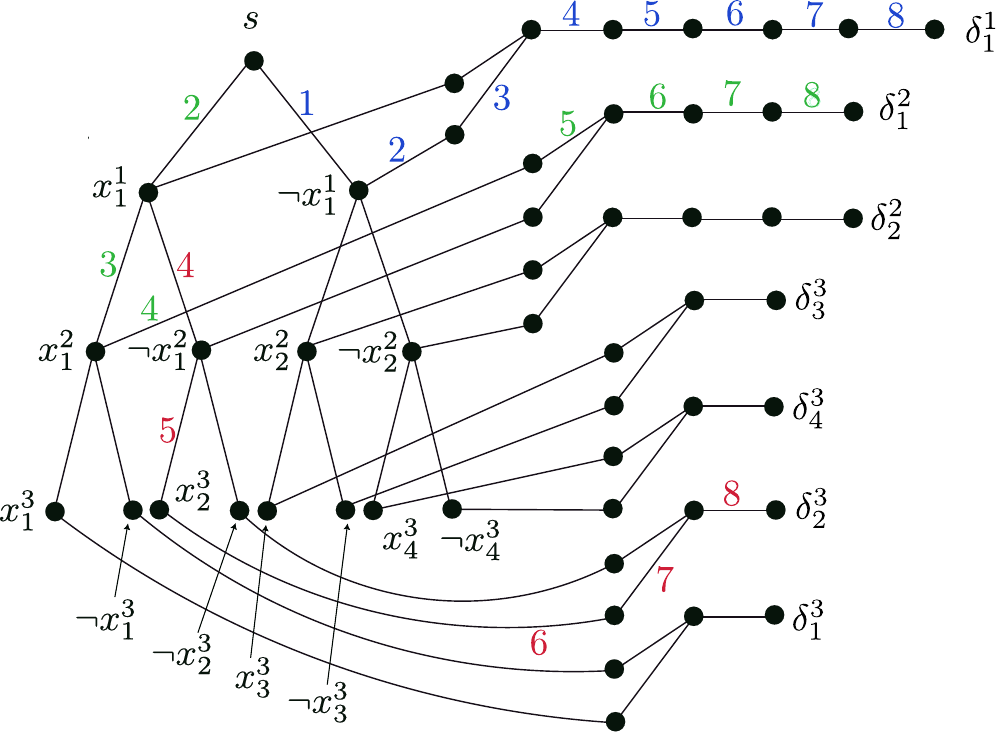}
\caption{An application of Observation~\ref{obs:message-propelling}.
For an edge $(u, v)$, the integer $t'$ associated with it denotes that
at time $t'$, vertex $u$ forwarded message to vertex $v$.
\label{fig:message-propelling}}
\end{figure}

Consider the toy example in Figure~\ref{fig:message-propelling} and
suppose $t = 8$.
Note that $\dist(s, \delta^1_1) = 8$, $\dist(s, \delta^2_1) = 7$, and
$\dist(s, \delta^3_3) = 6$.
There are two shortest paths $s$ to $\delta^1_1$,
one via $x^{1}_1$ and another via $\neg x^1_1$.
At least one of these paths should have a downtime of $0$.
Without loss of generality, suppose the path containing $\neg x^1_1$
has downtime $0$.
At every time step, the message is propelled along this path.
The blue time stamps in the figure denote this. 
Now consider the two shortest paths from $s$ to $\delta^2_1$,
one via $x^2_1$ and another via $\neg x^2_1$.
Both paths already have downtime of $1$ as $s$ forwarded
the message to $\neg x^1_1$.
Hence, by Observation~\ref{obs:message-propelling}, at least 
one of the paths has no more downtime.
Suppose such a path passes via $x^2_1$.
Hence, at every remaining time step, 
the message is propelled along this path, which 
is denoted by the green time stamps in the figure.
Finally, consider the shortest paths from $s$ to $\delta^3_2$,
one via $x^3_2$ and another via $\neg x^3_2$.
Both of these paths already occurred downtime of two
as $s$ forwarded the message to $\neg x^1_1$ and 
$x^1_1$ forwarded the message to $x^2_1$.
As before, this implies that at every remaining time step, 
the message is propelled along this path, which is 
denoted by the red time stamps in the figure.

\section{Lower Bound Parameterized by the Solution Size}
\label{sec:eth-lower-bound}

In this section, we prove Theorem~\ref{thm:sol-size-lb}, i.e.,
\textsc{Telephone Broadcast} does not admit
an algorithm running in time $2^{2^{o(t)}} \cdot |V(G)|^{\calO(1)}$,
unless the \ETH\ fails.
We present a reduction 
from a variant of \textsc{$3$-SAT} called
\textsc{$(3,3)$-SAT}.
In this variation, an input is a boolean satisfiability
formula $\psi$ in conjunctive normal form 
such that each clause contains \emph{at most}\footnote{We remark 
that if each clause contains \emph{exactly} $3$ variables,
and each variable appears $3$ times, then the problem
is polynomial-time solvable \cite[Theorem 2.4]{DBLP:journals/dam/Tovey84}.}
$3$ variables and each variable appears at most $3$ times.
Consider the following reduction from an instance $\phi$
of \textsc{$3$-SAT} with $n$ variables and $m$ clauses
to an instance $\psi$ of \textsc{$(3, 3)$-SAT} mentioned in 
\cite{DBLP:journals/dam/Tovey84}:
For every variable $x_i$ that appears $k~(> 3)$ times,
the reduction creates $k$ many new variables
$v^1_i, v^2_i, \dots, v^{k}_i$,
replaces the $j^{th}$ occurrence of $v_i$ by $v^{j}_i$, 
and adds a series of new clauses to encode
$v^1_i \Rightarrow v^2_i \Rightarrow \cdots \Rightarrow v^k_i 
\Rightarrow v^1_i$.
For an instance $\psi$ of \textsc{$3$-SAT},
suppose $k_i$ denotes the number of times a variable $x_i$
appeared in $\phi$.
Then, $\sum_{i \in [n]} k_i \le 3 \cdot m$.
Hence, the reduced instance
$\psi$ of \textsc{$(3,3)$-SAT} has at most $3m$ variables
and $4m$ clauses.
Using the \ETH~\cite{DBLP:journals/jcss/ImpagliazzoP01} and 
the sparsification lemma~\cite{DBLP:journals/jcss/ImpagliazzoPZ01}, 
we have the following result.

\begin{proposition}
\label{prop:3-3-SAT-ETH-lb}
\textsc{$(3,3)$-SAT}, with $n$ variables and $m$ clauses, 
does not admit an algorithm running in time $2^{o(m+n)}$,
unless the \emph{\ETH} fails.
\end{proposition}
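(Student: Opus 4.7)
The plan is to chain the Exponential Time Hypothesis with the sparsification lemma of Impagliazzo, Paturi, and Zane and the polynomial-time reduction from \textsc{$3$-SAT} to \textsc{$(3,3)$-SAT} of Tovey recalled immediately above. By the \ETH, \textsc{$3$-SAT} on $n'$ variables has no $2^{o(n')}$-time algorithm, and the sparsification lemma lets us further assume that the number of clauses satisfies $m' = \calO(n')$, so that $n' + m' = \Theta(n')$ and \textsc{$3$-SAT} admits no $2^{o(n' + m')}$-time algorithm either.

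First I would take an arbitrary sparsified instance $\phi$ of \textsc{$3$-SAT} with $n'$ variables and $m' = \calO(n')$ clauses and feed it into Tovey's reduction. The reduction runs in polynomial time and produces an equivalent instance $\psi$ of \textsc{$(3,3)$-SAT}; using the bounds derived in the paragraph preceding the statement, $\psi$ has $n \le 3m'$ variables and $m \le 4m'$ clauses, so $n + m \le 7 m' = \calO(n' + m')$.

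Next I would argue by contrapositive. Assume for contradiction that \textsc{$(3,3)$-SAT} admits an algorithm running in time $2^{o(m + n)}$. Composing the polynomial-time reduction with this hypothetical algorithm decides $\phi$ in time $2^{o(m + n)} \cdot (m + n)^{\calO(1)} = 2^{o(n' + m')}$, contradicting the \ETH-based lower bound of the first paragraph. This yields Proposition~\ref{prop:3-3-SAT-ETH-lb}.

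I do not foresee a serious obstacle, as the argument is essentially a routine composition of well-known ingredients. The only step requiring mild care is the use of sparsification: without it, the output size of Tovey's reduction is bounded in terms of $m'$ alone, and one must check that this is still enough to rule out $2^{o(m + n)}$-time algorithms. Passing through the sparsified regime where $m' = \calO(n')$ ensures that $n + m$ remains $\calO(n')$, which makes the final contradiction with \ETH transparent.
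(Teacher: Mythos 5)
Your proposal is correct and follows essentially the same route as the paper, which likewise obtains the bound by combining the \ETH\ with the sparsification lemma and the Tovey reduction whose output has at most $3m'$ variables and $4m'$ clauses. The paper states this only in passing rather than as a formal proof, so your write-up is simply a more explicit version of the same argument.
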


We highlight that every variable appears positively and negatively
at most once.
Otherwise, if a variable appears only positively (respectively, only negatively)
then we can assign it \true\ (respectively, \false) 
and safely reduce the instance by removing the clauses 
containing this variable.
Hence, instead of using the terms 
the first, second, or third appearance of the variable,
we use the terms like first positive, first negative, second positive, or
second negative appearance of the variable. 

\subparagraph*{Bird's Eye View of the Reduction}
We start with a high-level overview of the reduction.
Consider an instance $\psi$ of \textsc{$(3, 3)$-SAT}
and the first four variables $v_1, v_2, v_3, v_4, \dots, $ and 
the first three clauses
$C_1 \equiv (\neg v_1 \lor \neg v_2 \lor v_4)$,
$C_2 \equiv (v_1 \lor v_2 \lor v_3)$,
$C_3 \equiv (v_1 \lor \neg v_2), \dots$~.
First, the reduction
partitions the variables into 
`buckets' $B^1 = \{v_1\}$,
$B^2 = \{v_2, v_3\}$, $\dots$~.
It rename the vertices $v_1, v_2, v_3, v_4, \dots$ to
$v^1_1, v^2_1, v^2_2, v^3_1, \dots$ to highlight 
the buckets these variables are in.
Then, it constructs a complete binary tree rooted
at the source $s$.
The $(\ell^{\circ})$$^{th}$ level of this tree corresponds to 
$(\ell^{\circ})$$^{th}$ bucket.
More precisely, for each variable $v^{\ell^{\circ}}_i$, 
it associates two vertices 
$x^{\ell^{\circ}}_i$ and $\neg x^{\ell^{\circ}}_i$
that have the same parent.
See Figure~\ref{fig:eth-lower-bound-overview}.

\begin{figure}[t]
\centering
\includegraphics[scale=0.5]{./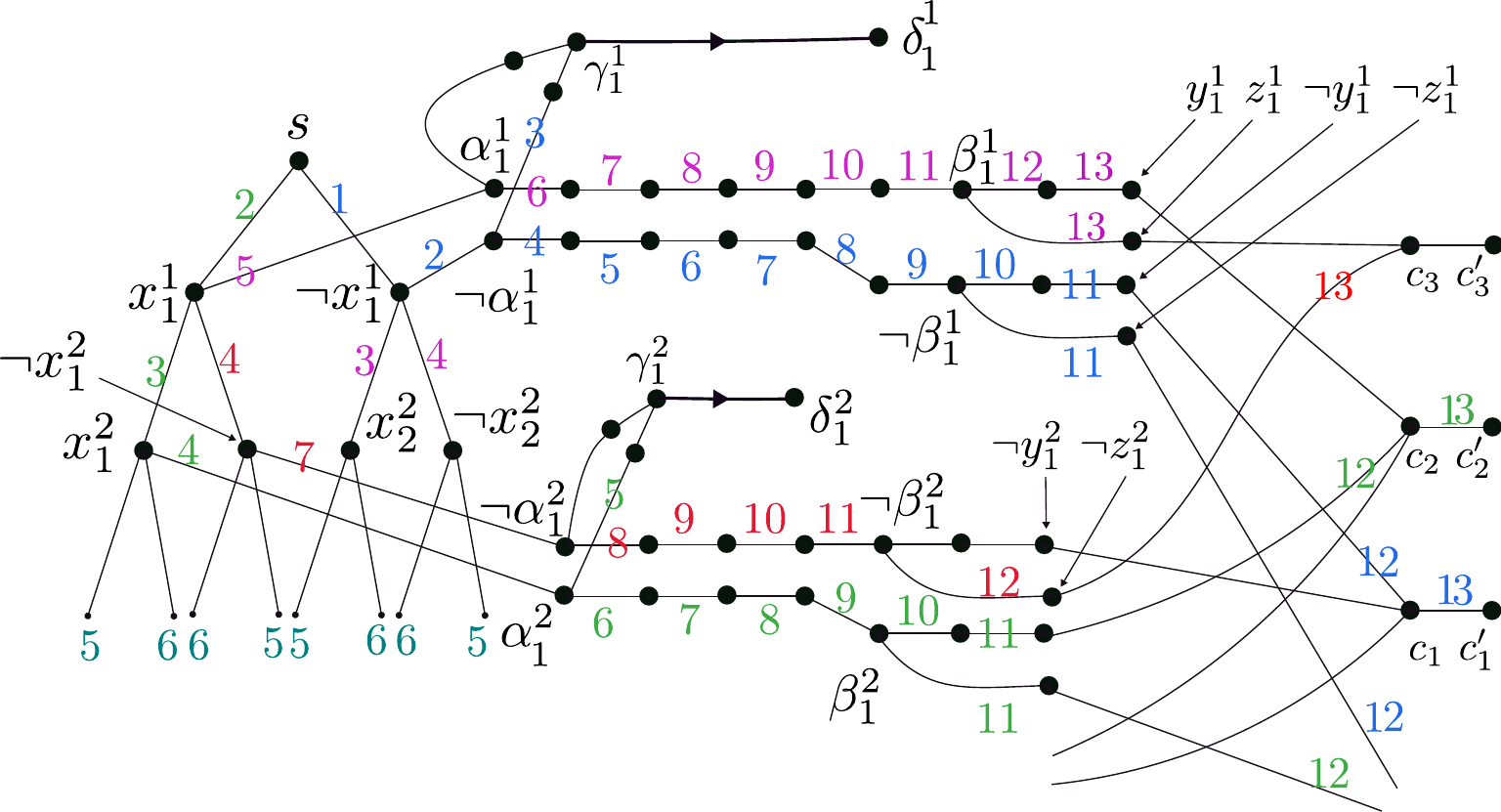}
\caption{Overview of the reduction in Section~\ref{sec:eth-lower-bound}.
For clarity, the figure only shows vertices corresponding to variables $v^1_1$ and $v^2_1$ (part of $v^2_2$ but not $v^3_1$).
With renaming the variables, $C_1 \equiv (\neg v^1_1 \lor \neg v^2_1 \lor v^3_1)$,
$C_2 \equiv (v^1_1 \lor v^2_1 \lor v^2_2)$, and $C_3 \equiv (v^1_1 \lor \neg v^2_1)$. 
$C_2$ and $C_3$ are connected to $y^1_1$ and $z^1_1$, respectively,
to encode the first positive and the second positive appearance of variable $v^1_1$.
The thick line with an arrow towards $\delta$-type vertices denotes the 
long paths whose length is adjusted so that Observation~\ref{obs:message-propelling} is applicable.
For this example, suppose $t = 13$.
The blue and green colored time-stamps show
a broadcasting protocol that forwards the messages to $\neg x^1_1$ 
(before forwarding it to $x^1_1$) and to 
$x^2_2$ (before forwarding it to $\neg x^2_2$), respectively.
This corresponds to assigning $v^1_1 = \false$ and $v^{2}_1 = \true$.
These choices ensure that vertices $c'_2$ and $c'_3$
get the message in time
which translate to satisfying clauses $C_2$ and $C_3$.
The red and purple colored time stamps, respectively, show 
that the message forwarded by $x^1_1$ or $\neg x^2_1$ 
can not reach the clause vertices.
In the case of red time stamps, the message can't reach
even if it deviates from others.
\label{fig:eth-lower-bound-overview}}
\end{figure}

In this complete binary tree, the order in which the
the message is forwarded from a parent to children, corresponding to the assignment of the variable.
For example, if $s$ forwards the message to $\neg x^1_1$
before $x^1_1$, this corresponds to
setting $v^1_1$ to $\false$.
Similarly, if $x^1_1$ (which is a parent of both 
$x^2_1$ and $\neg x^2_1$) forwards the message to $x^2_1$
before $\neg x^2_1$, this corresponds to
assigning $x^2_1$ to $\true$.
For each vertex of the form $x^{\ell^{\circ}}_i$, 
the reduction adds two vertices $y^{\ell^{\circ}}_i$, $z^{\ell^{\circ}}_i$ 
and uses them to connect to the clauses in which
variable $v^{\ell^{\circ}}_i$
appears positively for the first and the second 
time, respectively.
The vertex $x^{\ell^{\circ}}_i$ forwards the message towards
these vertices via the path containing $\alpha^{\ell^{\circ}}$ and 
$\beta^{\ell^{\circ}}$.
Note the way $\beta^{\ell^{\circ}}$ is connected to $y^{\ell^{\circ}}_i$, $z^{\ell^{\circ}}_i$ in Figure~\ref{fig:eth-lower-bound-overview}.
Hence, if the message is received at $\beta^{\ell^{\circ}}$ at time
$t'$, it can reach both $y^{\ell^{\circ}}_i$, $z^{\ell^{\circ}}_i$ at 
time $t' + 2$.
The reduction adds vertices corresponding to the negative
literal of the variable similarly.

Towards the clause side, the reduction adds an edge $(c_j, c'_j)$ 
for clause $C_j$ and makes $c_j$ adjacent with the 
\emph{$y$ or $z$-type} vertices representing the literal in it.
For every clause $C_j$,
the length of path connecting $\alpha^{\ell^{\circ}}$ to 
$\beta^{\ell^{\circ}}$ are adjusted such that 
the shortest path from $s$ to $\alpha^{\ell^{\circ}}$ to 
$\beta^{\ell^{\circ}}$ to $c_j$ is of length $t$ 
(if it is via $y^{\ell^{\circ}}$)
or $t - 1$ (if it is via $y^{\ell^{\circ}}$).
Hence, for each clause, at least one of \emph{$\beta$-type}
vertices representing the literal in it should receive 
the message in time $t - 4$.
This will translate to setting the literal to \true\ and, hence, satisfying
the clause.
For example, consider the clause $C_2 \equiv (v^1_1 \lor v^2_1 \lor v^2_2)$.
For the message to reach at $c'_2$ in time $t$, it should reach
at least one of $\beta^1_1$, $\beta^2_1$ or $\beta^2_2$ 
at time $t - 4$.
The reduction sets the lengths of paths such that if the message
is to reach $c_j$ via $\beta^2_1$, it should reach
$x^2_1$ before it reaches $\neg x^2_1$.
This translates into setting the variable $v^2_1$ to \true\ which
satisfies the clause $C_2$.

It remains to ensure that the message flows in the complete
binary tree in the desired way.
Towards this, for every variable $v^{\ell^{\circ}}_i$,
it adds a path with specified length from $\gamma^{\ell^{\circ}}_i$
to $\delta^{\ell^{\circ}}_i$ and 
connects $\gamma^{\ell^{\circ}}_i$ with a path of length two to both
$\alpha^{\ell^{\circ}}_i$ and $\neg \alpha^{\ell^{\circ}}_i$.
See Figure~\ref{fig:eth-lower-bound-overview}. 
The path of length two ensures that the message does not `jump'
from $(\neg \alpha^{1}_1)$-line to $(\alpha^{1}_1)$-line.
The length of this path depends on the bucket $B^{\ell^{\circ}}$ containing variable $v^{\ell^{\circ}}_i$ and adjusted such that
$\dist(s, \delta^{\ell^{\circ}}_i) = t - (\ell^{\circ} - 1)$. 
Hence, the shortest path from $s$ to $\delta^{\ell^{\circ}}_i$ can 
have a downtime of at most $\ell^{\circ} - 1$.
The reduction relies on these `distant vertices' and 
Observation~\ref{obs:message-propelling}, to ensure that
the broadcasting of the message corresponds to a valid
assignment of the variables.

\subparagraph*{Reduction}
The reduction takes as input an instance $\psi$ of 
\textsc{$(3,3)$-SAT} with $n$ variables and 
outputs an instance $(G,s,t)$ of \textsc{Telephone Broadcast}. 
The reduction modifies the given instance by renaming the variables and 
then constructs graph $G$ as mentioned below.

\begin{itemize}

\item Suppose $\calV = \{v_1, \dots, v_n\}$ is the collection 
of variables and $C = \{C_1, \dots, C_m\}$ is the collection
of clauses in $\psi$.
Here, we consider 
$\langle v_1, \dots, v_n\rangle$ and
$\langle C_1, \dots, C_m \rangle$ be arbitrary but fixed
orderings of variables and clauses in $\psi$.
The second ordering specifies the first/second positive/negative
appearance of variables in $\calV$ in the natural way. 

\item It groups the variables in the various buckets as follows:
\begin{itemize}
\item It processes the variables as per the ordering above.
It initializes the process by creating the first
bucket $B^{1} = \{v_1\}$ and assigning $\ell^{\circ} = 1$.
Until there is a variable that is not assigned a bucket,
it increments $\ell^{\circ}$ to $\ell^{\circ} + 1$,
creates an empty bucket $B^{\ell^{\circ}}$, and adds next
$2^{\ell^{\circ} - 1}$ many variables in the sequence to it.

\item Suppose the reduction has constructed $(\ell - 1)$ many
buckets so far. Note that for every $\ell^{\circ} < \ell - 1$,
bucket $B^{\ell^{\circ}}$ contains exactly $2^{\ell^{\circ} - 1}$ 
many variables.
Reduction add dummy variables to $\calV$ (towards the end 
of the above sequence), to ensure that even bucket 
$B^{\ell - 1}$ is full, i.e., it contains $2^{(\ell - 1) - 1}$ many
variables.
Finally, it adds another $2^{\ell - 1}$ many dummy variables
to $\calV$ (towards the end of the above sequence),
and constructs the bucket $B^{\ell}$ containing all these variables.
\item For every bucket $B^{\ell^{\circ}}$, 
it renames variables in it to 
$v^{\ell^{\circ}}_1, v^{\ell^{\circ}}_2, \dots, v^{\ell^{\circ}}_{2^{{\ell^{\circ}}-1}}$.
\end{itemize}

We remark that this last bucket of dummy variables is critical for the 
proof of correctness.
Moreover, the total number of variables now is at most
four times the number of variables in the original instance.
\item It starts constructing the graph $G$ by adding 
the source vertex $s$, and then adds a complete binary tree of 
height $\ell$ which is rooted at $s$.
We define \emph{level} of a vertex in the tree as its
distance from the root $s$.
Hence, for every $\ell^{\circ} \in [\ell]$, 
there are $2 \cdot 2^{\ell^{\circ} - 1}$ many vertices
in level $\ell^{\circ}$.
It renames the vertices of the tree as follows:
\begin{itemize}
\item It renames left child of $s$ as $x^1_1$ and 
right child of $s$ as $\neg x^1_1$.
\item For every $\ell^{\circ} \in \{1, 2, 3, \dots, \ell - 1\}$,
and $i \in [2^{\ell^{\circ} - 1}]$,
consider the vertex $x^{\ell^{\circ}}_i$ and $x^{\ell^{\circ}}_i$.
It renames two children of $x^{\ell^{\circ}}_i$ as
$x^{\ell^{\circ} + 1}_{2i - 1}$ and $\neg x^{\ell^{\circ} + 1}_{2i - 1}$,
and two children of $\neg x^{\ell^{\circ}}_i$ as
$x^{\ell^{\circ} + 1}_{2i}$ and $\neg x^{\ell^{\circ} + 1}_{2i}$.
\end{itemize}
Recall that for every $\ell^{\circ} \in [\ell]$,
bucket $B^{\ell^{\circ}}$ contains $2^{\ell^{\circ} - 1}$ variables,
viz $v^{\ell^{\circ}}_1, v^{\ell^{\circ}}_2, \dots, v^{\ell^{\circ}}_{2^{\ell^{\circ} - 1}}$.
For a variable $v^{\ell^{\circ}}_{i}$,
vertices $x^{\ell^{\circ}}_{i}$ and $\neg x^{\ell^{\circ}}_{i}$ 
correspond to positive and negative literal of $v^1_1$, respectively.
\item For every $\ell^{\circ} \in [\ell]$ and every $i \in [2^{\ell^{\circ} - 1}]$,
the reduction adds the following vertices and edges.
\begin{itemize}
\item It adds a path with end vertices  $\alpha^{\ell^{\circ}}_i$ and 
$\beta^{\ell^{\circ}}_i$ such that
\begin{align}
(2\ell^{\circ} - 1) + 1 + 1 + \dist(\alpha^{\ell^{\circ}}_i,
\beta^{\ell^{\circ}}_i ) &= t - 4.
\label{eq:alpah-beta-path}
\end{align}
We specify the value of $t$ at the end of the reduction.
Similarly, it adds another path with end vertices 
$\neg \alpha^{\ell^{\circ}}_i$ and $\neg \beta^{\ell^{\circ}}_i$
such that the distance between them satisfies the same condition.
\item It adds edges to make $\alpha^{\ell^{\circ}}_i$
adjacent with $x^{\ell^{\circ}}_i$
 and $\neg \alpha^{\ell^{\circ}}_i$ adjacent with $\neg x^{\ell^{\circ}}_i$.
\item It adds two vertices $y^{\ell^{\circ}}_i$ and $z^{\ell^{\circ}}_i$.
It adds a path of length two to connect $\beta^{\ell^{\circ}}_i$ to $y^{\ell^{\circ}}_i$, and adds an edge to make $\beta^{\ell^{\circ}}_i$
adjacent with $z^{\ell^{\circ}}_i$.
Similarly, it adds $\neg y^{\ell^{\circ}}_i$, $\neg z^{\ell^{\circ}}_i$,
and a path of length two to connect $\neg \beta^{\ell^{\circ}}_i$ to 
$\neg y^{\ell^{\circ}}_i$, and adds an edge to make $\beta^{\ell^{\circ}}_i$
adjacent with $\neg z^{\ell^{\circ}}_i$.
\item It adds a path with end vertices
$\gamma^{\ell^{\circ}}_i$ and $\delta^{\ell^{\circ}}_i$,
two paths of length two to connect $\gamma^{\ell^{\circ}}_i$
to $\alpha^{\ell^{\circ}}_i$ and $\neg \alpha^{\ell^{\circ}}_i$
such that
\begin{align}
\dist(s, \delta^{\ell^{\circ}}_i) = \dist(s, \alpha^{\ell^{\circ}}_i) + 
\dist(\alpha^{\ell^{\circ}}_i, \gamma^{\ell^{\circ}}_i) + 
 \dist(\gamma^{\ell^{\circ}}_i, \delta^{\ell^{\circ}}_i) &= t - (\ell^{\circ} - 1)\nonumber \\
 (\ell^{\circ}_i + 1) + 2 + \dist(\gamma^{\ell^{\circ}}_i, \delta^{\ell^{\circ}}_i) &= t - (\ell^{\circ} - 1).
 \label{eq:delta-dist}
 \end{align}
\end{itemize}
\item For every clause $C_j$, the reduction adds an edge $(c_j, c'_j)$.
Suppose $C_j$ contains the first positive (respectively, the second) 
appearance of variable $v^{\ell^{\circ}}_i$ for some $\ell^{\circ} \in [\ell]$ 
and $i \in [2^{\ell^{\circ} - 1}]$,
then, it adds an edge to make $y^{\ell^{\circ}}_i$ (respectively, 
$z^{\ell^{\circ}}_i$) adjacent with $c_j$.
Similarly, if $C_j$ contains the first positive (respectively, the second) 
appearance of variable $v^{\ell^{\circ}}_i$,
then, it adds an edge to make $\neg y^{\ell^{\circ}}_i$ (respectively, 
$\neg z^{\ell^{\circ}}_i$) adjacent with $c_j$.
\end{itemize}

This completes the construction.
The reduction sets $t = 2\ell + 6$ to ensure that all the distances 
mentioned above are at least two,  and returns $(G, s, t)$ 
as an instance of \textsc{Telephone Broadcast}.
In the following two lemmas, we prove the correctness of the reduction.

\begin{lemma}
\label{lemma:eth-lower-bound-forward}
If $\psi$ is a \yes-instance of \textsf{$(3,3)$-SAT} 
then $(G, s, t)$ is a \yes-instance of \textsc{Telephone Broadcast}.
\end{lemma}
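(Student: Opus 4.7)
The plan is to construct, from a satisfying assignment $\tau\colon \calV \to \{\true,\false\}$, a broadcast protocol $(\calT, \{C(v)\})$ that informs every vertex by time $t$. The spanning tree $\calT$ contains the complete binary tree rooted at $s$, every gadget path (the $\alpha$-$\beta$ and $\neg\alpha$-$\neg\beta$ paths, the length-two paths to $y, z, \neg y, \neg z$, and the $\gamma$-$\delta$ path), and, for each variable $v^{\ell^{\circ}}_i$, exactly one of the two length-two paths connecting $\gamma^{\ell^{\circ}}_i$ to $\alpha^{\ell^{\circ}}_i$ or $\neg\alpha^{\ell^{\circ}}_i$: the one on the side whose literal $\tau$ makes $\true$ (the other intermediate vertex remains as a leaf of $\calT$). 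For each clause $C_j$, I pick a literal satisfying $C_j$ under $\tau$ and add the corresponding $y$-$c_j$ or $z$-$c_j$ edge together with the edge $(c_j, c'_j)$.

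For the children orders, at each parent of a variable's two literals inside the binary tree I place the literal that $\tau$ makes $\true$ first. At each literal vertex $w$, I use a $\tau$-dependent strategy: if $w$ is \emph{active} (its literal is true under $\tau$) I use the order $[\alpha_w, \text{first tree child}, \text{second tree child}]$, and otherwise the order $[\text{first tree child}, \text{second tree child}, \alpha_w]$, where ``first tree child'' means the child whose variable $\tau$ makes $\true$. A straightforward induction on $\ell^{\circ}$ then shows that an active level-$\ell^{\circ}$ vertex receives the message by time $2\ell^{\circ} - 1$ and an inactive one by time $2\ell^{\circ}$, matching the quantities calibrated by \eqref{eq:alpah-beta-path} and \eqref{eq:delta-dist}. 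At each active $\alpha^{\ell^{\circ}}_i$ I forward first to the intermediate leading to $\gamma^{\ell^{\circ}}_i$ and then along the $\alpha$-$\beta$ path (absorbing one downtime at $\alpha$); at each inactive $\neg\alpha^{\ell^{\circ}}_i$ I reverse this, forwarding first along $\neg\alpha$-$\neg\beta$ and then to the (leaf) intermediate. At each $\beta$ the order is $[\text{intermediate toward } y, z]$, and each $c_j$ forwards to $c'_j$.

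The verification splits into three bookkeeping checks. First, for every clause $C_j$, the chosen satisfying literal is active, so the bound $T(x^{\ell^{\circ}}_i) \le 2\ell^{\circ} - 1$ combined with \eqref{eq:alpah-beta-path} gives $T(\beta) \le t - 4$, delivering the message to $c'_j$ by time $t$ along $\beta \to \text{int} \to y \to c_j \to c'_j$ or along $\beta \to z \to c_j \to c'_j$. Second, every $\delta^{\ell^{\circ}}_i$ is reached in $\calT$ only from the active side: $T(\alpha) \le 2\ell^{\circ}$ together with the ``intermediate first'' order at $\alpha$ yields $T(\gamma) \le 2\ell^{\circ} + 2$, and \eqref{eq:delta-dist} gives $T(\delta) \le t$. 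Third, on the inactive side the bound $T(\neg x) \le 2\ell^{\circ}$ combined with the downtime-zero propelling along $\neg\alpha$-$\neg\beta$ gives $T(\neg\beta) \le t - 2$, which is enough to inform $\neg y$, $\neg z$, and the leaf intermediate by time $t$. The main obstacle is that the slack in the path-length equations is razor-thin: any mis-ordered forwarding at $\alpha$ or $\beta$ costs one round and pushes some $\delta^{\ell^{\circ}}_i$ or some $c'_j$ past the deadline. The $\tau$-dependent strategy at the literal vertices is precisely what steers the unavoidable downtime at $\alpha$ onto the active side, where \eqref{eq:alpah-beta-path} can absorb it, rather than the inactive side, where the $\gamma$-path would suffer.
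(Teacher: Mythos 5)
Your protocol is essentially the paper's: true-literal child first in the binary tree, the active literal vertex serving its $\alpha$-neighbour before its tree children, the active $\alpha$ serving the $\gamma$-connector before the $\alpha$--$\beta$ path, and the same invariant (active level-$\ell^{\circ}$ vertices informed by $2\ell^{\circ}-1$, inactive by $2\ell^{\circ}$) verified against Equations~\eqref{eq:alpah-beta-path} and~\eqref{eq:delta-dist}. The only deviation is cosmetic (you let $\neg\alpha^{\ell^{\circ}}_i$ rather than $\gamma^{\ell^{\circ}}_i$ inform the inactive connector's middle vertex, which the timing easily absorbs), so the proof is correct and matches the paper's argument.
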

\begin{proof}
Consider the following invariant:
For any $\ell^{\circ} \in [\ell]$ and 
$i \in [2^{\ell^{\circ} - 1}]$, consider the two vertices 
$x^{\ell^{\circ}}_i$ and $\neg x^{\ell^{\circ}}_i$ corresponding
to variable $v^{\ell^{\circ}}_i$.
The times by which these two vertices receive the message
are in $\{2\ell^{\circ} - 1, 2\ell^{\circ}\}$.
We define a broadcasting protocol that maintains this invariant. 
Suppose $\pi: \calV \mapsto \{\true, \false\}$ is 
a satisfying assignment for $\psi$.

\begin{itemize}
\item In the first round, i.e., at time $t' = 1$, 
if $\pi(v^1_1) = \true$ the source $s$ forwards the message
to $x^{1}_1$ otherwise it forwards it to $\neg x^{1}_1$.
\item For any $\ell^{\circ} \in \{2, 3, \dots, \ell\}$ and 
$i \in [2^{\ell^{\circ}}]$,
suppose $x^{\ell^{\circ}}_i$ has received the message 
before $\neg x^{\ell^{\circ}}_i$ and at time $t'$.
Then, $x^{\ell^{\circ}}_i$ forwards it to $\alpha^{\ell^{\circ}}_i$
at time $t' + 1$.
Note that by this time, $\neg x^{\ell^{\circ}}_i$ has received
the message from its parent in the complete binary tree.
In the case when $\neg x^{\ell^{\circ}}_i$ has received the message 
before $x^{\ell^{\circ}}_i$ at time $t'$, 
it forwards it to $\neg \alpha^{\ell^{\circ}}_i$ at time $t' + 1$.

\item Consider the time $t'+1$ when both $x^{\ell^{\circ}}_i$  and
$\neg x^{\ell^{\circ}}_i$ have received the message.
Recall that two children of $x^{\ell^{\circ}}_i$ are
$x^{\ell^{\circ+1}}_{2i-1}$ and $\neg x^{\ell^{\circ+1}}_{2i-1}$.
If $\pi(v^{\ell^{\circ+1}}_{2i-1}) = \true$ then $x^{\ell^{\circ}}_i$ 
forwards the message to $x^{\ell^{\circ+1}}_{2i-1}$ 
otherwise it forwards it to $\neg x^{\ell^{\circ+1}}_{2i-1}$.
Similarly, two children of $\neg x^{\ell^{\circ}}_i$ are
$x^{\ell^{\circ+1}}_{2i}$ and $\neg x^{\ell^{\circ+1}}_{2i}$.
If $\pi(v^{\ell^{\circ+1}}_{2i}) = \true$ then $\neg x^{\ell^{\circ}}_i$ 
forwards the message to $x^{\ell^{\circ+1}}_{2i}$ 
otherwise it forwards it to $\neg x^{\ell^{\circ+1}}_{2i}$.

\item Amongst $\alpha^{\ell^{\circ}}_i$ and $\neg \alpha^{\ell^{\circ}}_i$,
whomever receives the message first, forwards it towards 
$\gamma^{\ell^{\circ}}_i$, which continues to forward it 
in the direction of $\delta^{\ell^{\circ}}_i$.
Suppose $\gamma^{\ell^{\circ}}_i$ receives the message via
$\alpha^{\ell^{\circ}}_i$, then after forwarding the message in
the direction of $\delta^{\ell^{\circ}}_i$, it forwards it to
the vertex adjacent with both $\gamma^{\ell^{\circ}}_i$ and 
$\neg \alpha^{\ell^{\circ}}_i$.

\item Then, $\alpha^{\ell^{\circ}}_i$ (respectively, $\neg \alpha^{\ell^{\circ}}_i$), forwards the message towards 
$\beta^{\ell^{\circ}}_i$ (respectively, towards $\neg \beta^{\ell^{\circ}}_i$).
Whenever $\beta^{\ell^{\circ}}_i$ (respectively, $\neg \beta^{\ell^{\circ}}_i$) receives the message, it 
first forwards it to $y^{\ell^{\circ}}_i$ (respectively, to $\neg y^{\ell^{\circ}}_i$) 
and then to $z^{\ell^{\circ}}_i$ (respectively, to $\neg z^{\ell^{\circ}}_i$). 
 
 \item Whenever $y^{\ell^{\circ}}_i$ and $z^{\ell^{\circ}}_i$ (respectively, 
 $\neg y^{\ell^{\circ}}_i$ and $\neg z^{\ell^{\circ}}_i$) receive
 the message, they forward it to the unique clause (if any) vertex they are 
 adjacent them (if time permits).
\end{itemize}
This completes the protocol.

We now prove that every vertex receives the message in time $t$.
For any $\ell^{\circ} \in [\ell]$ and $i \in [2^{\ell^{\circ}-1}]$, 
consider the pair of vertices $x^{\ell^{\circ}}_i$ and $\neg x^{\ell^{\circ}}_i$.
From the invariant that the above protocol maintains, these two vertices
receive the message at time $2\ell^{\circ}$ and $2\ell^{\circ} - 1$
(which need not be in the same order).
As $t = 2\ell + 6$, every vertex in the complete binary
tree receives the message.
Before moving forward, consider the vertex in $\{x^{\ell^{\circ}}_i, \neg x^{\ell^{\circ}}_i\}$ that received the message at time $2\ell^{\circ} - 1$.
As $\dist(s, x^{\ell^{\circ}}_i) = \dist(s, \neg x^{\ell^{\circ}}_i) = \ell^{\circ}$,
this implies the path from $s$ to the vertex has downtime
of $\ell^{\circ}-1$.

Now consider the vertex of type $\delta^{\ell^{\circ}}_i$.
By the protocol, one of the paths from $s$ to $\delta^{\ell^{\circ}}_i$,
which are either via $x^{\ell^{\circ}}_i$ or $\neg x^{\ell^{\circ}}_i$, 
has downtime of $\ell^{\circ}-1$ till the message reaches 
one of these two vertices.
However, after this, the message is continuously propelled towards
$\delta^{\ell^{\circ}}_i$, and hence there is no more downtime
on this path.
As $\dist(s, \delta^{\ell^{\circ}}_i) = t - (\ell^{\circ} - 1)$,
the vertex $\delta^{\ell^{\circ}}_i$ receives the message 
in time $t$.
We note that since the $\dist(\gamma^{\ell^{\circ}}_{i}, 
\delta^{\ell^{\circ}}_{i})$ is at least two, the vertices
in the path connected $\alpha^{\ell^{\circ}}_{i}$ and 
$\neg \alpha^{\ell^{\circ}}_{i}$ to $\gamma^{\ell^{\circ}}_{i}$ 
also receive the message before it reaches $\delta^{\ell^{\circ}}_{i}$.
See Figure~\ref{fig:eth-lower-bound-layers}.

Now, consider the vertices of type $\beta^{\ell^{\circ}}_i$
and suppose $x^{\ell^{\circ}}_i$ receives the message before
$\neg x^{\ell^{\circ}}_i$.
Recall Equation~\ref{eq:alpah-beta-path}. 
The first term $(2\ell^{\circ}-1)$ corresponds to the time
by which $x^{\ell^{\circ}}_i$ receives the message.
The next term, i.e., $+1$ accounts for the round in which 
$x^{\ell^{\circ}}_i$ forwards the message towards $\beta^{\ell^{\circ}}_i$.
The second $+1$ term accounts for the round in which 
$\alpha^{\ell^{\circ}}_i$ forwards the message towards 
$\gamma^{\ell^{\circ}}_i$.
After that, the path continuously propers the message towards 
$\beta^{\ell^{\circ}}_i$.
Hence, by Equation~\ref{eq:alpah-beta-path}, $\beta^{\ell^{\circ}}_i$
receives the message at time $t - 4$ when 
$x^{\ell^{\circ}}_i$ receives the message before
$\neg x^{\ell^{\circ}}_i$, i.e., when $\pi(v^{\ell^{\circ}}_i) = \true$.
Using identical arguments, $\beta^{\ell^{\circ}}_i$
receives the message at time $t - 4$ when $\pi(v^{\ell^{\circ}}_i) = \false$.
We can extend this argument further to claim
that $y^{\ell^{\circ}}_i$ and $z^{\ell^{\circ}}_i$ receive the message
at time $t - 2$ if $\pi(v^{\ell^{\circ}}_i) = \true$.
Similarly, $\neg y^{\ell^{\circ}}_i$ and $\neg z^{\ell^{\circ}}_i$ 
receive the message at time $t - 2$ if $\pi(v^{\ell^{\circ}}_i) = \false$.

As $\pi$ is the satisfying assignment, for every clause
$C_j$, we can fix a literal that satisfies it.
By the arguments in the previous paragraph, 
the $y$-type or $z$-type vertex
corresponding to that literal has received the message at time $t - 2$.
Hence, the vertex $c'_j$ receives the message in time.
The only case that remains to be argued is the vertices on 
$\neg \alpha^{\ell^{\circ}}_i$ to $\neg \beta^{\ell^{\circ}}_i$ when 
$x^{\ell^{\circ}}_i$ receives the message before $x^{\ell^{\circ}}_i$.
Consider the vertices on paths 
$\alpha^{\ell^{\circ}}_i$ to $\beta^{\ell^{\circ}}_i$
and the corresponding vertices, i.e., vertices that are at the same distance
from $s$ on $\neg \alpha^{\ell^{\circ}}_i$ to $\neg \beta^{\ell^{\circ}}_i$.
For each pair of such vertices,
the time difference between the messages received is exactly two. 
Hence, in the last two rounds, i.e., while $y^{\ell^{\circ}}_i$ and $z^{\ell^{\circ}}_i$ are forwarding messages to vertices encoding the clause,
vertices $\neg y^{\ell^{\circ}}_i$ and $\neg z^{\ell^{\circ}}_i$ receives
the message.
This concludes that all the vertices in the graph receive the message in time.
\end{proof}

\begin{figure}[t]
\centering
\includegraphics[scale=0.65]{./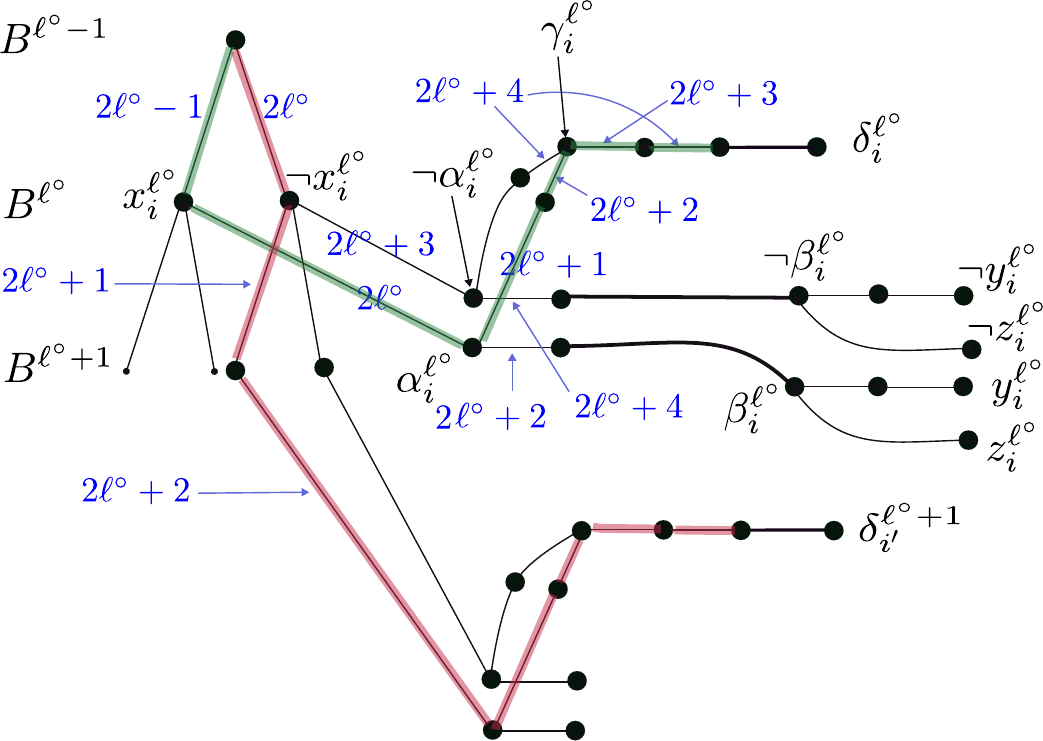}
\caption{Illustration of a pair of vertices in bucket $B^{\ell^{\circ}}$.
For the sake of clarity, we do not show all the vertices.
For Lemma~\ref{lemma:eth-lower-bound-forward}, 
the blue time stamps show the round in which a vertex forwards
the message along the edge.
For Lemma~\ref{lemma:eth-lower-bound-backward},
the green-shaded path and red-shaded path show the message propelled along the path due to constraints imposed by $\delta^{\ell^{\circ}}_i$
and $\delta^{\ell^{\circ} +1}_{i'}$, respectively.
\label{fig:eth-lower-bound-layers}}
\end{figure}

\begin{lemma}
\label{lemma:eth-lower-bound-backward}
If $(G, s, t)$ is a \yes-instance of \textsc{Telephone Broadcast}
then $\psi$ is a \yes-instance of \textsf{$(3,3)$-SAT}.
\end{lemma}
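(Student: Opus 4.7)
The plan is to invert the correspondence established in Lemma~\ref{lemma:eth-lower-bound-forward}: given any valid broadcast protocol, I would read off an assignment $\pi$ from the order in which $x^{\ell^{\circ}}_i$ and $\neg x^{\ell^{\circ}}_i$ first receive the message, show that the timing constraints of the construction force the protocol to march through the binary tree in exactly the pattern used in the forward direction, and finally argue that the clause-side gadgets can meet their deadlines only when $\pi$ satisfies every clause of $\psi$.

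First I would prove the central invariant by induction on $\ell^{\circ} \in [\ell]$: for every $i \in [2^{\ell^{\circ} - 1}]$, the two vertices $x^{\ell^{\circ}}_i$ and $\neg x^{\ell^{\circ}}_i$ receive the message at times $\{2\ell^{\circ}-1, 2\ell^{\circ}\}$, and the one informed at time $2\ell^{\circ}-1$ must immediately begin propelling along its $\alpha$-$\gamma$-$\delta$ arm. The base case $\ell^{\circ} = 1$ is immediate since $s$ can only forward to one neighbor per round. For the inductive step I would invoke Observation~\ref{obs:message-propelling} on $\delta^{\ell^{\circ}}_i$: since $\dist(s, \delta^{\ell^{\circ}}_i) = t - (\ell^{\circ}-1)$ by Equation~\ref{eq:delta-dist}, some path from $s$ to $\delta^{\ell^{\circ}}_i$ has downtime at most $\ell^{\circ}-1$. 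By construction, every such path must traverse $x^{\ell^{\circ}}_i$ or $\neg x^{\ell^{\circ}}_i$, then the corresponding $\alpha$-vertex, the length-two bridge to $\gamma^{\ell^{\circ}}_i$ (which prevents jumping between the $(\alpha)$- and $(\neg\alpha)$-lines for free), and then the calibrated path to $\delta^{\ell^{\circ}}_i$. Comparing the total length with the downtime budget, and combining with the downtime already incurred at ancestors by the inductive hypothesis, leaves no slack beyond the single forced round at the parent; this forces the parent vertex to forward to the pair $\{x^{\ell^{\circ}}_i, \neg x^{\ell^{\circ}}_i\}$ in two consecutive rounds at the claimed times. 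The padding bucket $B^{\ell}$ of dummy variables is essential here, since it guarantees that every internal vertex of the binary tree is itself the parent of a constrained $(x,\neg x)$-pair, so the induction is not short-circuited at the last real level.

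Next I would define $\pi(v^{\ell^{\circ}}_i) = \true$ if and only if $x^{\ell^{\circ}}_i$ is informed strictly before $\neg x^{\ell^{\circ}}_i$, and check that $\pi$ satisfies every clause. Fix a clause $C_j$ and its pendant edge $(c_j, c'_j)$. Since $c'_j$ is a leaf and must be informed by time $t$, the vertex $c_j$ must be informed by time $t-1$, and its only neighbors (apart from $c'_j$) are the $y$- or $z$-type vertices corresponding to the literals of $C_j$. Unfolding Equation~\ref{eq:alpah-beta-path} for the unique literal whose $\beta$-arm reaches $c_j$ in time, and using the invariant above together with Observation~\ref{obs:message-propelling} applied to $c'_j$, the arithmetic works out only if the chosen literal's $\beta$-arm is propelled from the very beginning, which by the invariant translates precisely to that literal being \true\ under $\pi$. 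Thus at least one literal of $C_j$ is satisfied, completing the reduction.

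The main obstacle will be the inductive step for the invariant, specifically a careful audit of all $s$-to-$\delta^{\ell^{\circ}}_i$ paths to rule out unexpected shortcuts through the clause side or through the $\alpha$-$\beta$ segment of some sibling gadget; this is exactly where the length-two bridges and the dummy bucket $B^{\ell}$ earn their keep. Once the invariant is in hand, the clause-side verification is essentially a distance-plus-downtime bookkeeping exercise parallel to the calculations in Lemma~\ref{lemma:eth-lower-bound-forward}.
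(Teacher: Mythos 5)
Your proposal is correct and follows essentially the same route as the paper: an induction over the levels of the binary tree that uses Observation~\ref{obs:message-propelling} on the $\delta$-type vertices to pin the pair $x^{\ell^{\circ}}_i,\neg x^{\ell^{\circ}}_i$ to times $2\ell^{\circ}-1,2\ell^{\circ}$ (the paper's Claims~\ref{claim:x-forward-alpha} and~\ref{claim:broacast-to-asst}), followed by reading off the assignment and the clause-deadline arithmetic. The only presentational difference is that the paper makes explicit that the second member of each pair is forced to time $2\ell^{\circ}$ by the $\delta$-vertex attached to its \emph{children} at level $\ell^{\circ}+1$, which is exactly the role of the dummy bucket $B^{\ell}$ that you correctly identify.
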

\begin{proof}
Consider a broadcasting protocol that is represented by
$(\calT,\{C(v) \mid v\in V(\calT)\})$, where $\calT$ is a spanning tree of $G$ 
rooted in $s$ and for each $v \in V(\calT)$, 
$C(v)$ is an ordered set of children of $v$ in $\calT$.
Recall that as soon as $v$ gets the message, $v$ starts to send it to the children 
in $\calT$ in the order defined by $C(v)$. 
We establish specific properties of this broadcasting.

\begin{claim}
\label{claim:x-forward-alpha}
For any $\ell^{\circ} \in [\ell]$ and 
$i \in [2^{\ell^{\circ} - 1}]$, consider the two vertices 
$x^{\ell^{\circ}}_i$ and $\neg x^{\ell^{\circ}}_i$ corresponding
to variable $v^{\ell^{\circ}}_i$.
Suppose $x^{\ell^{\circ}}_i$ receives the message at 
$2\ell^{\circ} - 1$ and $\neg x^{\ell^{\circ}}_i$ receives the message
after $x^{\ell^{\circ}}_i$.
Then, in round $2\ell^{\circ}$, vertex $x^{\ell^{\circ}}_i$ forwards 
the message to $\alpha^{\ell^{\circ}}_i$.
\end{claim}
\begin{claimproof}
Consider the vertex $\delta^{\ell^{\circ}}_i$.
By Equation~\ref{eq:delta-dist}, it is at distance $t - (\ell^{\circ} - 1)$ 
from $s$.
By Observation~\ref{obs:message-propelling}, there
is at least one path from $s$ to $\delta^{\ell^{\circ}}_i$ 
that has at most $(\ell^{\circ} - 1)$ downtime.
Once again, there are only two paths of length $t - (\ell^{\circ} - 1)$
from $s$ to $\delta^{\ell^{\circ}}_i$, viz one via $x^{\ell}_i$ and
another via $\neg x^{\ell}_i$.
By the construction, $\dist(s, x^{\ell^{\circ}}_i) = \ell^{\circ}$,
and hence the path from $s$ to $x^{\ell^{\circ}}_i$ and
the path from $s$ to $\neg x^{\ell^{\circ}}_i$ has downtime
of $\ell^{\circ} - 1$ and at least $\ell^{\circ}$, respectively.
This implies that the path from $x^{\ell^{\circ}}_i$ to 
$\delta^{\ell^{\circ}}_i$ can not have any more downtime.
Hence, at time $2\ell^{\circ}$, vertex $x^{\ell^{\circ}}_i$ forwards 
the message to $\alpha^{\ell^{\circ}}_i$.
\end{claimproof}

\begin{claim}
\label{claim:broacast-to-asst}
For any $\ell^{\circ} \in [\ell - 1]$ and 
$i \in [2^{\ell^{\circ} - 1}]$, consider the two vertices 
$x^{\ell^{\circ}}_i$ and $\neg x^{\ell^{\circ}}_i$ corresponding
to variable $v^{\ell^{\circ}}_i$.
The times by which these two vertices receive the message
are $2\ell^{\circ} - 1$ and $2\ell^{\circ}$ (but need not be
in the same order).
\end{claim}
\begin{claimproof}
We prove this claim using the induction on $\ell^{\circ}$.
Consider the base case when $\ell^{\circ} = 1$.
By Equation~\ref{eq:delta-dist}, $\delta^{\ell^{\circ}}_i$
is at distance $t - (\ell^{\circ} - 1)$.
By Observation~\ref{obs:message-propelling}, at least one path from $s$ to $\delta^1_1$ has zero downtime.
Note that there are only two paths from $s$ to $\delta^1_1$ of distance
exactly $t$,
viz one passing through $x^1_1$ and another passing through
$\neg x^{1}_1$.
Hence, at least one of these two paths has zero downtime.
Without loss of generality, suppose the path via $\neg x^{1}_1$
has a downtime of zero.
Hence, $s$ forwards the message to $\neg x^{1}_1$ in the first round.
As $s$ has only two children, $s$ forwards the message to $x^1_1$ 
in the second round.
This completes the base case.

Suppose the induction hypothesis states the claim is true
for $\ell^{\circ} - 1$.
We prove the claim holds for $\ell^{\circ}$.
Consider a pair of vertices $x^{\ell^{\circ}}_i$ and $\neg x^{\ell^{\circ}}_i$ for
some $i \in [2^{\ell^{\circ} - 1}]$.
We consider the following two cases,
which are exhaustive by the induction hypothesis.

\emph{Case 1: The parent of these two vertices receives the message
at time $2(\ell^{\circ} - 1) - 1 = 2\ell^{\circ} - 3$.}

\emph{Case 2: The parent of these two vertices receives the message
at time $2(\ell^{\circ} - 1) = 2\ell^{\circ} - 2$.}

In the first case, 
by Claim~\ref{claim:x-forward-alpha}, the parent of $\ell^{\circ}_i$
forwards the message to 
$\alpha$-type vertex adjacent to it in round $2(\ell^{\circ} - 1)$.
Hence, in either case, the parent of 
$x^{\ell^{\circ}}_i$ and $\neg x^{\ell^{\circ}}_i$ can forward 
the message to its children in round $(2\ell^{\circ} - 2) + 1 = 2\ell^{\circ} - 1$.

Now, consider the vertex $\delta^{\ell^{\circ}}_i$.
Once again, there are only two paths of length $t - (\ell^{\circ} - 1)$
from $s$ to $\delta^{\ell^{\circ}}_i$, viz one via $x^{\ell}_i$ and
another via $\neg x^{\ell^{\circ}}_i$, but both 
contain the parent of these two vertices.
As the parent of these vertices can only forward the message 
in round $2\ell^{\circ} - 1$, 
and $\dist(s, x^{\ell^{\circ}}_i) = \dist(s, \neg x^{\ell^{\circ}}_i) = \ell^{\circ}$, there is already a downtime of $\ell^{\circ} - 1$ 
by the time message reaches $x^{\ell^{\circ}}_i$ or $\neg x^{\ell^{\circ}}_i$.
By Observation~\ref{obs:message-propelling}, there
is at least one path from $s$ to $\delta^{\ell^{\circ}}_i$ 
that has at most $\ell^{\circ} - 1$ downtime.
Hence, there is at least one path from the parent of 
$x^{\ell^{\circ}}_{i}$ and $\neg x^{\ell^{\circ}}_{i}$
to  $\delta^{\ell^{\circ}}_i$ without anymore downtime.
This implies that at time $2\ell^{\circ} - 1$, 
the parent of $x^{\ell^{\circ}}_{i}$ and $\neg x^{\ell^{\circ}}_{i}$
forwards the message to either of these two vertices.
Without loss of generality, suppose $x^{\ell^{\circ}}_{i}$ receives
the message by time $2\ell^{\circ} - 1$, and hence 
before $\neg x^{\ell^{\circ}}_{i}$.
See the green shaded path in Figure~\ref{fig:eth-lower-bound-layers}.
It remains to argue that the $\neg x^{\ell^{\circ}}_{i}$ receives the 
message in time $2\ell^{\circ}$, i.e., in the next round.

Suppose $x^{\ell^{\circ} +1}_{i'}$ and $\neg x^{\ell^{\circ} +1}_{i'}$
are the children of $\neg x^{\ell^{\circ}}_{i}$.
Now, consider the vertex $\delta^{\ell^{\circ} +1}_{i'}$.
Once again, there are only two paths of length $t - (\ell^{\circ} + 1 - 1)$
from $s$ to $\delta^{\ell^{\circ} + 1}_i$, viz one via $x^{\ell^{\circ} + 1}_{i'}$ and another via $\neg x^{\ell^{\circ} + 1}_{i'}$, 
but both contain parent of $\neg x^{\ell^{\circ}}_i$.
As the parent of $\neg x^{\ell^{\circ}}_i$ can only forward the message 
in round $2\ell^{\circ}$, 
and $\dist(s, x^{\ell^{\circ}}_i) = \dist(s, \neg x^{\ell^{\circ}}_i) = \ell^{\circ}$, there is already a downtime of $\ell^{\circ}$ 
by the time message reaches $\neg x^{\ell^{\circ}}_i$.
By Observation~\ref{obs:message-propelling}, there
is at least one path from $s$ to $\delta^{\ell^{\circ}}_i$ 
that has at most $\ell^{\circ}$ downtime.
Hence, there is at least one path from the parent of 
$\neg x^{\ell^{\circ}}_{i}$
to  $\delta^{\ell^{\circ}+1}_{i'}$ without anymore downtime.
This implies that at time $2\ell^{\circ}$, 
the parent of $\neg x^{\ell^{\circ}}_{i}$
forwards the message to $\neg x^{\ell^{\circ}}_{i}$.
See the red shaded path in Figure~\ref{fig:eth-lower-bound-layers}.
This concludes the proof of the claim.
\end{claimproof}

We use the above claim to construct a satisfying assignment
for $\psi$. 
The above claim is valid only for $\ell^{\circ} < \ell$.
However, the last bucket contains dummy variables, and hence, such a claim is not necessary. 

Consider the following assignment $\pi: \calV \mapsto \{\true, \false\}$.
For every $\ell^{\circ} \in [\ell - 1]$ and 
$i \in [2^{\ell^{\circ}} - 1]$, assign $\pi(v^{\ell^{\circ}}_i) = \true$
if $x^{\ell^{\circ}}_i$ receives the message in time $(2\ell^{\circ} - 1)$
otherwise, assign it to \false.
For the variables in $B^{\ell}$, arbitrarily assign them to \true\ or \false.
Claim~\ref{claim:broacast-to-asst} ensures that this is a valid assignment.
In the remaining proof, we argue that it is a satisfying assignment.
Before that, note that the proof of 
Claim~\ref{claim:broacast-to-asst} implies that
if $\pi(v^{\ell^{\circ}}_i) = \true$ then
$x^{\ell^{\circ}}_i$ receives the message in time $2\ell^{\circ} - 1$
and hence $\alpha^{\ell^{\circ}}_i$ receives the message in time
$2\ell^{\circ}$.
Similarly, if $\pi(v^{\ell^{\circ}}_i) = \false$ then
$\neg x^{\ell^{\circ}}_i$ receives the message in time $2\ell^{\circ} - 1$
and hence $\neg \alpha^{\ell^{\circ}}_i$ receives the message in $2\ell^{\circ}$.
Moreover, whichever vertex amongst $\alpha^{\ell^{\circ}}_i$ or $\neg \alpha^{\ell^{\circ}}_i$ receives the message first
needs to send it in the direction of $\gamma^{\ell^{\circ}}_i$.
By Equation~\ref{eq:alpah-beta-path},
$\dist(\alpha^{\ell^{\circ}}_i, \beta^{\ell^{\circ}}_i) = 
\dist(\neg \alpha^{\ell^{\circ}}_i, \neg \beta^{\ell^{\circ}}_i) = (t - 4) - (2\ell^{\circ} + 1)$.
If $\pi(v^{\ell^{\circ}}_i) = \true$ then $\beta^{\ell^{\circ}}_i$ receives the message in time $t - 4$, else 
$\pi(v^{\ell^{\circ}}_i) = \false$ and $\neg \beta^{\ell^{\circ}}_i$ receives the message in time $t - 4$.

Now, consider an arbitrary clause $C_j$ of $\psi$ and the corresponding
vertices $c'_j$ in the graph.
We started with a broadcast protocol that ensures the message
reaches all the vertices, especially $c'$, 
at least one of the three $\beta$-type vertices corresponding to
literals in $C_j$ did not receive the message in time $t-4$.
This implies that for every clause $C_j$, at least one of its literal is 
set to \true, and hence  $\pi$ is a satisfying assignment which concludes 
the proof.
\end{proof}

\begin{proof}[Proof of Theorem~\ref{thm:sol-size-lb}]
Assume there is an algorithm $\calA$ that, given an 
instance  $(G, s, t)$ of \textsc{Telephone Broadcast}, 
runs in time $2^{2^{o(t)}} \cdot n^{\calO(1)}$
and correctly determines whether it is \yes-instance.
Consider the following algorithm that takes as input
an instance $\psi$ of \textsc{$(3, 3)$-SAT} and determines
whether it is a \yes-instance.
It first constructs an equivalent instance $(G, s, t)$ of 
\textsc{Telephone Broadcast} as mentioned in this section.
Then, it calls algorithm $\calA$ as a subroutine and returns 
the same answer.
The correctness of this algorithm follows from the correctness
of algorithm $\calA$, Lemma~\ref{lemma:eth-lower-bound-forward} 
and Lemma~\ref{lemma:eth-lower-bound-backward}.
By the description of the construction,
it takes time polynomial in the number of variables in $\psi$
to return an instance of $(G, s, t)$ of \textsc{Telephone Broadcast},
and $t = \calO(\log(n))$.
This implies the running time of the algorithm for 
\textsc{$(3, 3)$-SAT} is $2^{2^{o(\log (n))}} \cdot n^{\calO(1)} = 2^{o(n)} \cdot n^{\calO(1)}$.
This, however, contradicts Proposition~\ref{prop:3-3-SAT-ETH-lb}.
Hence, our assumption is wrong, and
\textsc{Telephone Broadcast} does not admit an algorithm 
running in time $2^{2^{o(t)}} \cdot |V(G)|^{\calO(1)}$, 
unless the {\ETH} fails.
\end{proof}


\section{On Graphs of Bounded Feedback Vertex Set}
\label{sec:np-hard-fvs}

In this section, we prove the problem in \NP-\complete\
even when restricted to graphs with the feedback vertex 
set number one.
The starting point of our reduction is the following problem.

\defproblem{\textsc{Numerical $3$-Dimensional Matching}}{Disjoint sets $W$, $X$, and $Y$, each containing $m$ elements,
a function $\size: W\cup X\cup Y \mapsto \mathbb{N}^{+}$, and a positive
integer $T$ such that $\sum_{a \in W\cup X\cup Y} \size(a) = m \cdot T$.}{Does there exist a partition of $W\cup X\cup Y$  into $m$
disjoint sets $A_1, A_2, \dots, A_m$ such that for every $i \in [m]$,
$A_i$ contains exactly one element from each of $W$, $X$, and $Y$,
and $\sum_{a\in A_i} \size(a) = T$?}

This problem is known to be strongly \NP-\complete\ (See [SP16] in \cite{DBLP:books/fm/GareyJ79}).
Specifically, by Theorem 4.4 in \cite{DBLP:books/fm/GareyJ79},
the problem is \NP-\complete\ even when $\max_{a \in W\cup X\cup Y} \{\size(a)\} \le 2^{16} \cdot (3m)^4 = c_0 \cdot m^4$ for a constant $c_0$.
We define a closely related problem that is more suitable for our
problem, and prove that this problem is alos strongly \NP-\complete.

\defproblem{\textsc{Numerical $3$-Dimensional (Almost) Matching}}{
Disjoint sets $W$, $X$, and $Y$, each containing $m$ elements,
a function $\size: W\cup X\cup Y \mapsto \mathbb{N}^{+}$ such that
$(i)$ for any two elements $x \neq x'$ in $X$, $\size(x) \neq \size(x')$,
$(ii)$ for any two elements $y \neq y'$ in $Y$, $\size(y) \neq \size(y')$,
$(iii)$ for any $a \in W \cup X\cup Y$, $\size(a)$ is multiplicative of $m^2$, 
and
 two positive integers $T, \lambda$.}{Does there exist a partition of $W\cup X\cup Y$  into $m$
disjoint sets $A_1, A_2, \dots, A_m$ such that for every $i \in [m]$,
$A_i$ contains exactly one element from each of $W$, $X$, and $Y$,
and $T - \lambda \le  \sum_{a\in A_i} \size(a)$?}

\begin{lemma}
\textsc{Numerical $3$-Dimensional (Almost) Matching} is \NP-\complete\
in the strong sense, i.e., even when $\max_{a \in W\cup X\cup Y} \{\size(a)\} \le 2^{16} \cdot (3m)^4 \cdot m^2$.
\end{lemma}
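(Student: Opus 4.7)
The plan is to reduce from \textsc{Numerical $3$-Dimensional Matching}, for which the paper has already invoked strong \NP-\complete ness with $\max_{a} \size(a) \le c_0 m^4$. My first step will be to argue that the starting instance can additionally be taken to have pairwise distinct sizes inside $X$ and pairwise distinct sizes inside $Y$. This property is essentially inherited from the standard Garey--Johnson reduction from $3$-Dimensional Matching to Numerical 3DM (each element there receives an index-dependent contribution, so sizes inside a part come out distinct); if any tie-breaking is still needed, I would do a short preprocessing that scales sizes by a small constant and absorbs any collisions using dummy triples, blowing up $\max \size$ by only a constant factor.

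Given a Numerical 3DM instance $(W, X, Y, \size, T)$ with distinct $X$- and $Y$-sizes, the reduction itself is then very short: set $\size'(a) := m^2 \cdot \size(a)$ for every $a \in W \cup X \cup Y$, target $T' := m^2 T$, and slack $\lambda := 1$. Every $\size'(a)$ is a multiple of $m^2$; multiplying by $m^2$ preserves distinctness inside $X$ and inside $Y$; and $\max_a \size'(a) \le m^2 \cdot c_0 m^4 = c_0 m^6 = 2^{16}(3m)^4 \cdot m^2$, exactly the bound stated in the lemma.

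The forward direction of correctness is immediate: any Numerical 3DM partition in which every triple sums to $T$ gives a partition of the new instance whose triple sums are all equal to $m^2 T = T' \ge T' - \lambda$. The backward direction is where the choice of $\lambda$ matters. Every triple sum in the new instance is a sum of three multiples of $m^2$, hence is itself a multiple of $m^2$; and a nonnegative multiple of $m^2$ that is at least $T' - \lambda = m^2 T - 1$ is necessarily at least $m^2 T$. So any valid Almost Matching partition has $m^2 t_k \ge m^2 T$, where $t_k$ is the sum of the original sizes in the $k$-th triple. Dividing by $m^2$ gives $t_k \ge T$ for each $k$, and combined with $\sum_k t_k = \sum_a \size(a) = mT$ this forces $t_k = T$ for every $k$, so we recover a valid Numerical 3DM partition.

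The main obstacle I anticipate is the preliminary distinctness step: checking (or arranging) that Numerical 3DM remains strongly \NP-\complete\ under pairwise-distinct sizes inside $X$ and inside $Y$ while keeping $\max \size$ in $O(m^4)$, so that the subsequent multiplication by $m^2$ stays within the target bound. Granted that, the reduction succeeds for a simple ``quantization'' reason: choosing $\lambda$ strictly less than the common modulus $m^2$ of all sizes in the reduced instance collapses the ``almost'' relaxation back to exact equality at the level of the unscaled sizes, so the two problems become equivalent.
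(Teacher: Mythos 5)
Your core ``quantization'' idea is exactly the one the paper uses: scale all sizes by $m^2$ so that every triple sum is (essentially) a multiple of $m^2$, and choose $\lambda$ strictly below that modulus so that the relaxed inequality $T^{\circ}-\lambda\le\sum_{a\in A_i}\size^{\circ}(a)$, together with $\sum_a\size(a)=mT$, collapses back to exact equality. That part of your argument is sound and matches the paper's backward direction.

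The genuine gap is in how you obtain the distinctness conditions $(i)$ and $(ii)$. You defer them to the \emph{source} instance of the exact \textsc{Numerical $3$-Dimensional Matching} problem, either by appealing to unverified internal structure of the Garey--Johnson reduction or by a ``short preprocessing that absorbs collisions.'' The second route does not work as described: in the exact problem, two elements of $X$ with equal size are interchangeable, and any perturbation that separates their sizes changes the sum of every triple containing the perturbed element by an amount that depends on the (unknown) partition, so the exact target $T$ can no longer be met uniformly. There is no slack in the exact problem to absorb such a perturbation --- which is precisely why the paper performs the tie-breaking \emph{after} scaling, inside the ``almost'' version: it subtracts at most $m$ from each tied element of $X$ and of $Y$ (so a triple sum drops by at most $2m$) and compensates by taking $\lambda=2(m+1)$ rather than your $\lambda=1$; since $2(m+1)\ll m^2$, the integrality/quantization argument still forces $T\le\size(w)+\size(x)+\size(y)$ in the backward direction. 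Your first route (that the Garey--Johnson instance already has pairwise distinct sizes within $X$ and within $Y$) may well be true, but you neither verify it nor can you invoke it while treating strong \NP-\complete ness as a black box; as written, the foundational step of your reduction is unestablished, and the missing idea is precisely to move the tie-breaking into the slack-equipped target problem.
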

\begin{proof}
It is easy to verify that the problem is in \NP.
Consider a polynomial time reduction that takes as input an instance
$(W, X, Y, \size, T)$ of \textsc{Numerical $3$-Dimensional Matching}
and constructs an instance $(W, X, Y, \size^\circ, T^\circ, \lambda^\circ)$ 
of \textsc{Numerical $3$-Dimensional (Almost) Matching}
as mentioned below.
It is safe to assume that the number of elements $m$ in each set
is large, hence $m^2 \gg 2(m + 1)$.
\begin{itemize}
\item For every $a \in W \cup X \cup Z$,
define $\size^{\circ}(a) = m^2 \cdot \size(a)$.
Also, define $T^{\circ} = m^2 \cdot T$ and 
$\lambda = 2(m + 1)$. 
\item 
Suppose $(x_1, x_2,\dots, x_m)$ is the non-decreasing ordering of 
elements in $X$ with respect to $\size^{\circ}$ function. 
By the construction, for any $i \in [m - 1]$,
if $\size^{\circ}(x_i) \neq \size^{\circ}(x_{i+1})$
then $m^2 \le \size^{\circ}(x_{i+1}) - \size^{\circ}(x_{i})$.

\item Suppose $i$ is the smallest integer and $k \ge 2$ is the largest integer 
such that $x_{i+1}, x_{i+2}, \dots, x_{i + k}$ have the same size.
Then, redefine $\size^{\circ}(x_{i + j}) = \size^{\circ}(x_{i + j}) - (k - j)$.
As $k \le m$, we have
$\size^{\circ}(x_i) < \size^{\circ}(x_{i+1}) < \dots < \size^{\circ}(x_{i+k})$.
Hence, all the elements before $x_{i + k}$ have different sizes.
Repeating this process for all the elements in $X$ and then 
for elements in $Y$ results in the instance with the desired property.
\end{itemize}

It remains to prove that this new instance is equivalent to the input instance.
We prove that disjoint sets $A_1, A_2, \dots, A_m$ is a
solution for \textsc{Numerical $3$-Dimensional Matching}
if and only if it is a solution for 
\textsc{Numerical $3$-Dimensional (Almost) Matching}.
Consider an arbitrary set $A_i = \{w, x, y\}$
where $w \in W$, $x \in X$, and $y \in Y$.
It is sufficient to prove that  
$\size(w) + \size(x) + \size(y) = T$ if and only if
$T^{\circ} - \lambda \le \size^{\circ} (w) + \size^{\circ} (x) + \size^{\circ} (y)$.
The forward direction follows as the size of elements in $X \cup Y$ 
can reduce by at most $m$ and $\lambda = 2(m + 1)$.
In the backward direction, 
by the construction,
\begin{align*}
T^{\circ} - \lambda & \le \size^{\circ} (w) + \size^{\circ} (x) + \size^{\circ} (y)\\
m^2 \cdot T - \lambda & \le m^2 \cdot \size(w) + (m^2 \cdot \size(x) -\lambda_x) + (m^2 \cdot \size(y) - \lambda_y)\\
m^2 \cdot T - (\lambda - \lambda_x - \lambda_y)  & \le m^2 \cdot (\size(w) + \size(x) + \size(y)) \\
T - (\lambda - \lambda_x - \lambda_y)/m^2  &\le (\size(w) + \size(x) + \size(y)).
\end{align*}
Here, $0 \le \lambda_x, \lambda_y \le m$ are possible reduction in 
$\size^{\circ}(x)$ and $\size^{\circ}(y)$ because of multiple elements
having same weights.
As $\lambda = 2(m+1)$ and $m$ is large, we have $(\lambda - \lambda_x - \lambda_y)/m^2 \ll 1$.
As  $(\size(w) + \size(x) + \size(y))$ and $T$ are integers, we conclude that
$T \le (\size(w) + \size(x) + \size(y))$.
Recall that $A_i = \{w, x, y\}$ is an arbitrary set from the solution.
This implies that for an arbitrary set $A_i$, we have 
$T \le (\size(w) + \size(x) + \size(y))$.
As $\sum_{a \in W\cup X\cup Y} \size(a) = m \cdot T$, 
for every set $A_i$, 
we have $T = (\size(w) + \size(x) + \size(y))$.
This concludes the proof of the lemma.
\end{proof}

\begin{figure}[t]
\centering
\includegraphics[scale=0.65]{./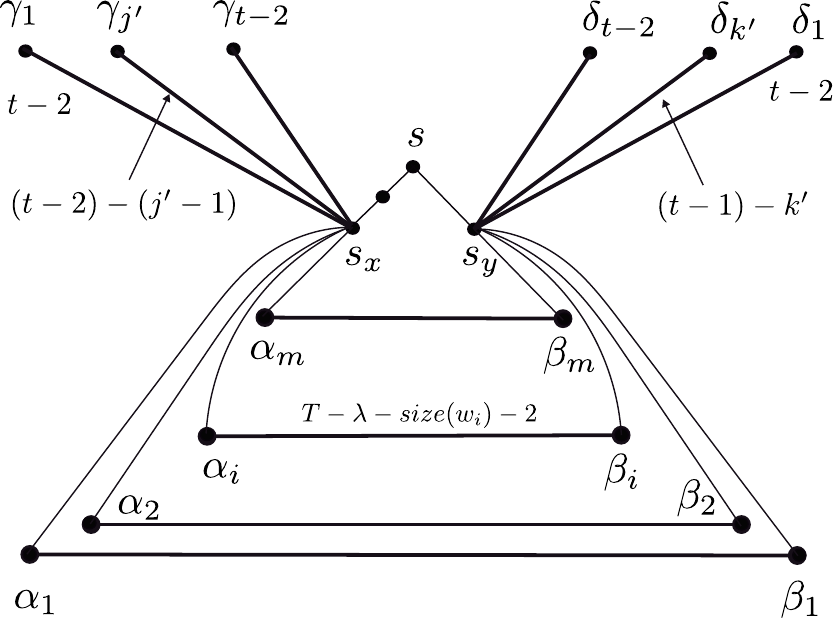}
\caption{Overview of the reduction. 
The thick line shows long paths of the specified size.
\label{fig:fvs-np-hard}}
\end{figure}

Consider the following reduction that takes as 
input an instance $(W, X, Y, \size, T, \lambda)$ of 
\textsc{Numerical $3$-Dimensional (Almost) Matching},
with the property that $\max_{a \in W\cup X\cup Y} \{\size(a)\} \le 2^{16} \cdot (3m)^4 \cdot m^2$,
and 
returns an instance $(G, s, t)$ of \textsc{Telephone Broadcast}.
Suppose $(w_1, w_2, \dots, w_m)$, $(x_1, x_2, \dots, x_m)$,
and $(y_1, y_2, \dots, y_m)$ be the arrangement of
elements in $W$, $X$, and $Y$, respectively accordingly to 
$\size$ function.
Note that the first sequence is non-decreasing, whereas
the other two are strictly increasing.
\begin{itemize}
\item 
For every element $w_i$ in $W$, the reduction adds two nodes
$\alpha_i$ and $\beta_i$ and connects them with the path of 
length $T - \lambda - \size(w_i) - 2$.
\item The reduction adds two vertices $s_x$ and $s_y$.
It adds edges to make $s_x$ adjacent with all vertices
in $\{\alpha_i \mid i \in [m]\}$ and $\{\beta_i \mid i \in [m]\}$.
\item For every $j' \in \{1, 2, \dots, t-2\} \setminus 
\{ t - \size(x_j) \mid j \in [m]\}$,
the reduction adds a vertex $\gamma_{j'}$ and connects it
to $s_x$ via path of length $(t - 2) - (j' - 1)$.

\item For every $k' \in \{1, 2, \dots, t - 2\} \setminus 
\{ t - \size(y_k) \mid k \in [m]\}$, 
the reduction adds a vertex $\delta_{k'}$ and connects it
to $s_y$ via path of length $(t - 1) - k'$.
\item Finally, it adds a source vertex $s$, connects it with
$s_x$ via path of length two and makes it adjacent with $s_y$.
\end{itemize}

This completes the construction of $G$.
The reduction returns $(G, s, t = T)$ as an instance of 
\textsc{Telephone Broadcast}.

We present a brief overview of the reduction.
The core idea is: $\gamma$-type vertices and $\delta$-type vertices
and too many and too far that they demand $s_x$ and $s_y$, respectively,
to keep forwarding the message in their directions.
The only time these two vertices are allowed to forward
message to $\alpha$-type vertices and $\beta$-type vertices are
at time $(t-\size(x_j))$ and $(t - \size(y_k))$ for some $x_j \in X$
and $y_k \in Y$.
Now, suppose vertex $\alpha_i$ gets the message at time $(t-\size(x_j))$
and $\beta_i$ gets the message at time $(t - \size(y_k))$.
These two vertices forward the message to each other 
to convey it to all the vertices in the path connecting them.
All the vertices in the path will be informed if and only if
$T - \lambda - \size(w_i) - 2 \le  (\size(x_j) - 1) + (\size(y_k) - 1)$ 
which implies 
$T - \lambda \le \size(w_i) + \size(x_j) + \size(y_k)$.
As this is true for every $i \in [m]$, 
there is a natural correspondence between 
a solution of \textsc{Telephone Broadcasting} and that of
\textsc{Numerical $3$-Dimensional (Almost) Matching}.
Note that selecting appropriate $x_j$ and $y_k$
for $w_i$ to make a partition depends on the time $s_x$ 
and $s_y$ forward message to $\alpha_i$ and $\beta_i$, respectively.
As any vertices can forward the message to at most
one of their neighbors in a round, we critically need that 
$\size(x_j) \neq \size(x_{j'})$ and $\size(y_k) \neq \size(y_{k'})$ 
for every $x_j \neq x_{j'} \in X$ and $y_k \neq y_{k'} \in Y$.
We formalize this intuition in the next two lemmas.

\begin{lemma}
\label{lemma:bounded-fvs-np-hard-forword}
If $(W, X, Y, \size, T, \lambda)$ is a \yes-instance of 
\textsc{Numerical $3$-Dimensional (Almost) Matching} then 
$(G, s, t)$ is a \yes-instance of 
\textsc{Telephone Broadcast}.
\end{lemma}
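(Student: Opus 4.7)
The plan is to build an explicit broadcast schedule from a valid partition. Given the solution $A_1,\dots,A_m$ of the \textsc{Numerical $3$-Dimensional (Almost) Matching} instance, I would relabel so that $A_i = \{w_i, x_{\pi(i)}, y_{\sigma(i)}\}$ for permutations $\pi,\sigma$ of $[m]$ satisfying $\size(w_i)+\size(x_{\pi(i)})+\size(y_{\sigma(i)}) \ge T-\lambda$ for every $i$.

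I would first inform the two hubs: the source $s$ forwards along the length-two path to $s_x$ in round $1$ and to $s_y$ in round $2$, so both $s_x$ and $s_y$ hold the message by time $2$. The degrees of $s_x$ and $s_y$ are each $t-2$, exactly matching the number of remaining rounds. The distance from $s_x$ to $\gamma_{j'}$, together with the requirement that $\gamma_{j'}$ be informed by time $t$, forces $s_x$ to forward along the $\gamma_{j'}$-path at round $j'+2$ for every $j' \in \{1,\dots,t-2\}\setminus\{t-\size(x_j)\}$; the remaining $m$ rounds---precisely $\{t-\size(x_j)+2 : j\in [m]\}$---are free. I would assign the free slot $t-\size(x_{\pi(i)})+2$ to $\alpha_i$, so $s_x$ forwards directly to $\alpha_i$ at that round. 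A symmetric schedule for $s_y$ informs each $\beta_i$ at the round corresponding to $\size(y_{\sigma(i)})$.

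Each $\alpha_i$ and $\beta_i$ then immediately begins propelling the message along the long path connecting them (of length $T-\lambda-\size(w_i)-2$) toward the other. The $\alpha_i$-wave reaches distance $d$ from $\alpha_i$ at time $(t-\size(x_{\pi(i)})+2)+d$, so it covers a prefix of the path of length about $\size(x_{\pi(i)})-2$; symmetrically, the $\beta_i$-wave covers a suffix of length about $\size(y_{\sigma(i)})-2$. The crucial algebraic check is that these two waves together cover every internal vertex of the path, which reduces exactly to $\size(x_{\pi(i)})+\size(y_{\sigma(i)}) \ge (T-\lambda-\size(w_i)-2)-O(1)$, i.e., to the partition inequality.

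The main obstacle is the tight round accounting at $s_x$ and $s_y$: every round from $3$ to $t$ is used, so one must verify that no two $\alpha_i$'s (respectively $\beta_i$'s) compete for the same free slot and that the $\gamma$-paths do not conflict with the $\alpha$-forwards. This is exactly where the hypotheses of the \textsc{(Almost) Matching} variant---strictly distinct $\size(x_j)$'s and distinct $\size(y_k)$'s---are used to make the assignment $j\mapsto t-\size(x_j)+2$ injective. Once the scheduling is fixed, confirming that each $\gamma_{j'}$ and $\delta_{k'}$ is informed exactly at time $t$ and that every vertex along every internal path is reached in time is a routine round-by-round verification.
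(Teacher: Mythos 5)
Your proposal follows essentially the same route as the paper's proof: inform $s_x$ and $s_y$ by round $2$, let each hub service the $\gamma$- (resp.\ $\delta$-) paths in the forced rounds $j'+2$, use the free slots $t-\size(x_{\pi(i)})+2$ (resp.\ $t-\size(y_{\sigma(i)})+2$) to launch $\alpha_i$ and $\beta_i$, and check that the two waves meet on the $w_i$-path via the partition inequality. Your round accounting (each wave covering about $\size(\cdot)-2$ vertices, with $O(1)$ slack absorbed by $\lambda$) matches the paper's argument, and your remark that distinctness of the $\size(x_j)$'s and $\size(y_k)$'s makes the slot assignment injective is exactly the property the construction relies on.
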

\begin{proof}
Suppose $A_1, A_2, \dots, A_m$ is a partition of \textsc{Numerical $3$-Dimensional (Almost) Matching} with desired properties.
Without loss of generality, suppose that $A_i$ contains $w_i$ for 
every $i \in [m]$.
Consider the following broadcasting protocol based on this partition.
\begin{itemize}
\item In the first round, $s$ forwards the message towards $s_x$.
In the second round, $s$ forwards the message to $s_y$, and the
vertex adjacent to $s$ and $s_x$ forwards it to 
$s_x$.
Hence, by the end of the second round, both $s_x$ and $s_y$
has received the message.
\item Consider vertex $s_x$ and subsequent rounds, i.e., when time 
$t' \ge 3$.
\begin{itemize}
\item If $t' - 2 = j' \in \{1, 2, \dots, t-2\} \setminus 
\{t - \size(x_j) \mid j \in [m]\}$ then
the $s_x$ forwards the message towards $\gamma_{j'}$.
In the subsequent rounds, the path from $s_x$ to $\gamma_{j'}$
keeps propelling the message towards $\gamma_{j'}$.
\item Suppose $t' - 2 = j' = t - \size(x_j)$.
As $A_1, A_2, \dots, A_m$ is a partition of $W \cup X \cup Y$
such that each set contains exactly one element from $W$, $X$, and $Y$,
there exists $A_i$  for some $i \in [m]$ such that $x_j \in A_i$.
Then, $s_x$ forwards message to $\alpha_i$.
In the subsequent rounds, the path from $\alpha_i$ to $\beta_i$
keeps propelling the message towards $\beta_i$.
\end{itemize}
\item The broadcasting protocol is defined in a similar way
at $s_y$. 
We repeat it for the sake of completeness.
\begin{itemize}
\item If $(t' - 2) = k' \in \{1, 2, \dots, t-2\} \setminus 
\{t - \size(y_k) \mid k \in [m]\}$ then
the $s_y$ forwards the message towards $\delta_{k'}$.
In the subsequent rounds, the path from $s_y$ to $\delta_{k'}$
keeps propelling the message towards $\delta_{k'}$.
\item Suppose $(t' - 2) = k' = t - \size(y_k)$.
As $A_1, A_2, \dots, A_m$ is a partition of $W \cup X \cup Y$
such that each set contains exactly one element from $W$, $X$, and $Y$,
there exists $A_i$  for some $i \in [m]$ such that $y_k \in A_i$.
Then, $s_y$ forwards message to $\beta_i$.
In the subsequent rounds, the path from $\beta_i$ to $\alpha_i$
keeps propelling the message towards $\alpha_i$.
\end{itemize}
\end{itemize}
This completes the protocol.

We now prove that every vertex in the graph gets the message
in time $T$.
As mentioned before, by the end of the second round, both $s_x$ and $s_y$
has received the message.
Consider the vertex $s_x$ in the subsequent round, i.e., at time $t' \ge 3$.
Define $t' - 2 = j'$, and suppose $j' \in \{1, 2, \dots, t-2\} \setminus 
\{t - \size(x_j) \mid j \in [m]\}$.
Then, by the protocol, $s_x$ forwards the message 
towards $\gamma_{j'}$ in the $(t')^{th}$ round. 
Hence, by the end of $(t')^{th}$ round, the message has reached the vertex
at a distance $1$ from $s_x$ on the path towards $\gamma_{j'}$.
This implies that by the end of $t^{th}$ round, the message has reached
the vertex at distance $t - (t' - 1) = t - (j' + 2 - 1) = (t - 2) - (j' - 1)$.
This implies that all the vertices in the path from $s_x$ to $\gamma_{j'}$
has received the message.

Similarly, consider the vertex $s_y$ in the subsequent round, i.e., at time $t' \ge 3$.
Define $t' - 2 = k'$, and suppose $k' \in \{1, 2, \dots, t-2\} \setminus 
\{t - \size(y_k) \mid k \in [m]\}$.
Then, by the protocol, $s_y$ forwards the message 
towards $\delta_{k'}$ in the $(t')^{th}$ round. 
Hence, by the end of $(t')^{th}$ round, the message has reached the vertex
at a distance $1$ from $s_x$ on the path towards $\gamma_{j'}$.
This implies that by the end of $t^{th}$ round, the message has reached
vertex at distance $t - (t' - 1) = t - (k' + 2 - 1 = (t - 1) - k'$.
This implies that all the vertices in the path from $s_y$ to $\delta_{k'}$
has received the message.

It remains to argue about the vertices in the path from $\alpha_i$ to $\beta_{i}$.
By the protocol, if $A_i = \{w_i, x_j, y_k\}$,
vertex $\alpha_i$ gets the message at time $(t-\size(x_j))$ from $s_x$
and $\beta_i$ gets the message at time $(t - \size(y_k))$ from $s_y$.
These two vertices forward the message towards each other 
to convey it to all the vertices in the path connecting them.
As $T - \lambda \le \size(w_i) + \size(x_j) + \size(y_k)$, which implies
$T - \lambda - \size(w_i) - 2 \le  (\size(x_j) - 1) + (\size(y_k) - 1)$,
and hence all the vertices in the path receive the message in $t$ rounds.
This concludes the proof of the lemma.
\end{proof}

\begin{lemma}
\label{lemma:bounded-fvs-np-hard-backward}
If $(G, s, t)$ is a \yes-instance of 
\textsc{Telephone Broadcast} then
$(W, X, Y, \size, T, \lambda)$ is a \yes-instance of 
\textsc{Numerical $3$-Dimensional (Almost) Matching}.
\end{lemma}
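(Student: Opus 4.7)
The plan is to reverse-engineer the broadcasting protocol into a partition of $W \cup X \cup Y$. Given a protocol $(\calT, \{C(v) \mid v \in V(\calT)\})$ that informs every vertex of $G$ by time $t = T$, I would read off which $\alpha_i$ and $\beta_i$ are served by $s_x$ and $s_y$ in which rounds, and use this correspondence to produce the triples $A_1, \ldots, A_m$. The opening rounds are already forced: $s$ has only two neighbors, and the $\gamma$-vertex of smallest allowed index lies at distance $t$ from $s$, so by Observation~\ref{obs:message-propelling} the path from $s$ to it must have zero downtime. Hence $s$ forwards towards $s_x$ in round~$1$ and to $s_y$ in round~$2$, and consequently both $s_x$ and $s_y$ are informed by time~$2$.

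The core of the proof pins down the schedules of $s_x$ and $s_y$ by an inductive argument on the allowed $\gamma$- and $\delta$-indices, in the style of Claim~\ref{claim:broacast-to-asst}. The vertex $\gamma_{j'}$ lies at distance $t - j' + 1$ from $s$, so by Observation~\ref{obs:message-propelling} at least one path from $s$ to $\gamma_{j'}$ can sustain downtime at most $j' - 1$; induction on the allowed $j'$ shows that $s_x$ must forward along the path towards $\gamma_{j'}$ in round exactly $j' + 2$, for every $j' \in \{1,\dots,t-2\} \setminus \{t-\size(x_j)\mid j \in [m]\}$. This pins $s_x$'s choice in $t - 2 - m$ of its $t-2$ active rounds, leaving exactly $m$ \emph{free} rounds --- namely $t - \size(x_j) + 2$ for $j \in [m]$ --- during which $s_x$ must still forward to some neighbor. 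A symmetric analysis for $s_y$ using the $\delta_{k'}$'s leaves $m$ free rounds of the form $t - \size(y_k) + 2$ for $k \in [m]$.

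In each free round at $s_x$, the only remaining uninformed neighbors are the $\alpha$-vertices, so $s_x$ forwards to some $\alpha_i$; symmetrically $s_y$ forwards to some $\beta_i$. With exactly $m$ free rounds and $m$ targets on each side, and because the reduction enforces pairwise distinct $\size(x_j)$ and pairwise distinct $\size(y_k)$, the assignments induce bijections $j, k \colon [m] \to [m]$ with $\alpha_i$ informed at time $t - \size(x_{j(i)}) + 2$ and $\beta_i$ at time $t - \size(y_{k(i)}) + 2$. Along the path of length $T - \lambda - \size(w_i) - 2$ from $\alpha_i$ to $\beta_i$, the two propagation fronts can reach depths proportional to $\size(x_{j(i)})$ and $\size(y_{k(i)})$ respectively; for them to cover all internal vertices by time $t$, it must hold that $\size(w_i) + \size(x_{j(i)}) + \size(y_{k(i)}) \ge T - \lambda$ (up to small additive constants absorbed into the slack $\lambda$). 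Setting $A_i := \{w_i, x_{j(i)}, y_{k(i)}\}$ then yields the required partition.

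The main technical obstacle is to rule out ``parasitic'' propagation: one must show that no $\alpha_i$ (or $\beta_i$) can afford to be reached indirectly via its own internal path rather than directly from $s_x$ (respectively $s_y$). This is where the size bound $\max_a \size(a) = \calO(m^6)$ guaranteed by the strong \NP-completeness plays its role --- it ensures that every $\alpha_i$--$\beta_i$ path is much longer than what single-sided propagation can traverse within the remaining rounds, which in turn compels $s_x$ (and $s_y$) to use each of its $m$ free rounds to serve a distinct endpoint, yielding the well-defined bijections on which the rest of the argument rests.
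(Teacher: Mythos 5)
Your proposal follows the paper's proof almost step for step: force $s$ to serve $s_x$ in round~$1$ and $s_y$ in round~$2$ via the zero-downtime path to $\gamma_1$ and the downtime-one path to $\delta_1$; pin the schedule of $s_x$ (resp.\ $s_y$) so that every allowed $\gamma_{j'}$ (resp.\ $\delta_{k'}$) is served exactly at its deadline, leaving exactly the $m$ rounds corresponding to $t-\size(x_j)$ (resp.\ $t-\size(y_k)$) available for the $\alpha$'s (resp.\ $\beta$'s); and read the triples off the resulting assignment together with the path-coverage inequality $T-\lambda-\size(w_i)-2\le(\size(x_{j(i)})-1)+(\size(y_{k(i)})-1)$. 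The paper arrives at the same schedule by an explicit exchange argument normalizing $C(s_x)$; your deadline-counting induction is a somewhat cleaner route to the same conclusion, and the distinct-sizes and multiple-of-$m^2$ properties are used in the same way.

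The one point where you go beyond the paper is the ``parasitic propagation'' issue, and there your justification does not hold up. You claim the bound $\max_a\size(a)=\calO(m^6)$ makes every $\alpha_i$--$\beta_i$ path too long to be covered from a single side. But that path has length $T-\lambda-\size(w_i)-2$, which \emph{shrinks} as $\size(w_i)$ grows, while the one-sided reach from $\beta_i$ is roughly $\size(y_k)$; nothing in the construction excludes $\size(w_i)+\size(y_k)\ge T-\lambda$, in which case one-sided coverage is feasible, $s_x$ may legitimately never serve $\alpha_i$, and your bijection $j\colon[m]\to[m]$ is no longer well defined. The correct repair is not to exclude this case but to absorb it: if $\alpha_i$ is reached only from the $\beta_i$ side, the coverage condition already forces $\size(w_i)+\size(y_{k(i)})\ge T-\lambda$ on its own, so pairing $w_i$ with any leftover element of $X$ still produces a triple of weight at least $T-\lambda$, and the injective map from the served $\alpha$'s to the free rounds extends arbitrarily to the required bijection. (To be fair, the paper silently assumes every $\alpha_i$ and $\beta_i$ is a child of $s_x$, resp.\ $s_y$, in $\calT$ and never discusses this case, so you were right to flag it --- only the fix you offer is not the one that works.)
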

\begin{proof}
Consider a broadcasting protocol that is represented by
$(\calT,\{C(v) \mid v\in V(\calT)\})$, where $\calT$ is a spanning tree of $G$ 
rooted in $s$ and for each $v \in V(\calT)$, 
$C(v)$ is an ordered set of children of $v$ in $\calT$.
Recall that as soon as $v$ gets the message, $v$ starts to send it to the children 
in $\calT$ in the order defined by $C(v)$. 
We establish specific properties of this broadcasting.

First, consider $C(s)$, i.e., ordering of children of $s$.
Consider vertex $\gamma_{j'}$ for $j' = 1$.
This vertex is at distance $(t-2) - (j'-1) = t - 2$ from $s_x$,
and hence at distance $t$ from $s_x$.
By Observation~\ref{obs:message-propelling},
there is a path from $s$ to $\gamma_1$ with zero-downtime.
As there is a unique shortest path from $s$ to $\gamma_1$,
this path has zero downtime; this implies that in the first round, $s$ forwards the message towards $s_x$. 
The distance of $\delta_{k'}$ from $s_y$ is $(t - 1) - k'$
and hence from $s$ is $(t-1) - k' + 1$.
This implies $\delta_1$ is at distance $t - 1$ from $s$.
By Observation~\ref{obs:message-propelling},
there is a path from $s$ to $\gamma_1$ with downtime of one.
There is unique path from $s$ to $\delta_1$ of length $t - 1$,
which is vertex disjoint from $s$ to $\gamma_1$ path, apart from $s$.
As there is downtime on the path from $s$ to $\delta_1$ in the first round,
in the second round $s$ should forward the message to $s_y$.
Hence, by the end of the second round, both $s_x$ and $s_y$ have
received the message.

Now, consider $C(s_x)$, i.e., the ordering in which $s_x$ forwards
the message to its children. 
The number of children of $s_x$ is $t - 2$ and they are either 
$\gamma$-type vertices or $\alpha$-type vertices.
Consider any two vertices $\gamma$-type vertices
$\gamma_{j'}$ and $\gamma_{j''}$ such that
$j' < j''$, and hence $\dist(s_x, \gamma_{j''}) < \dist(s_x, \gamma_{j'})$.
Suppose $\gamma_{j''}$ appears before $\gamma_{j'}$ 
in the ordering $C(s_x)$, i.e., 
$s_x$ forwards the message
towards $\gamma_{j''}$ at time $t_1$ and towards $\gamma_{j'}$
at time $t_2$ such that $t_1 < t_2$.
As every vertex receives the message in time $t$ and from the
above inequalities,
we have, $\dist(s_x, \gamma_{j''}) < \dist(s_x, \gamma_{j'}) \le t - (t_2 - 1) < t - (t_1 - 1)$.
Hence, exchanging the position of $\gamma_{j''}$ and $\gamma_{j'}$
in $C(s_x)$ also leads to valid protocol.
For the rest of the proof, we suppose that for every $j' < j''$,
$\gamma_{j'}$ appears before $\gamma_{j''}$ in ordering $C(s_x)$.

With such modifications (if needed), suppose 
$s_x$ forwards the message towards $\gamma_{j'}$ at time $f_t(j')$.
Once again, this implies that 
$\dist(s_x, \gamma_{j'}) \le t - (f_t(j') - 1)$.
As $\dist(s_x, \gamma_{j'}) = (t - 2) - (j' - 1)$, we have
$(f(j') - 2) \le j'$.
We use this inequality to determine at what time $s_x$ forwards
the message towards $\gamma_{j'}$.
For example, this implies that when $j' = 1$, the only feasible value for 
$f(j') = 3$ as $s_x$ receives the message by the end of second round.
Hence, $s_x$ should forward the message towards
$\gamma_{1}$ at time $f(j')  = 3$.
Alternately, $\gamma_1$ is the first vertex in $C(s_x)$.
Recall that $\size(x_j)$ is a multiple of $m^2$.
Hence, a similar argument implies that (at least)
first $m^2 - 1$ vertices in $C(s_x)$ are $\gamma_1, \gamma_2, \dots,
\gamma_{m^2-1}$.
Note that if all the indices of $\gamma$-type vertices were consecutive,
then $f(j') = j' - 2$ and the ordering would have continued in the same
fashion.
As this is not the case, in the next paragraph, 
we argue that for a missing index in $\gamma$-type
vertices, $s_x$ forwards the message to $\alpha$-type vertices.  

Suppose $j'$ is the smallest and $j'' (\ge j')$ is the largest 
indices of $\gamma$-type vertices 
such that for every $j^{\circ} \in \{j', j' + 1, \dots, j''\}$,
we have $j^{\circ}$ is in $\{1, 2, \dots, t-2\} \setminus 
\{ t - \size(x_j) \mid j \in [m]\}$, i.e., it is an index of
some $\gamma$-type vertex and
$f_t(j^{\circ}) - 2 < j^{\circ}$.
As $j'$ is the smallest such index, for every $j^{\star} \in \{1, 2, \dots j' - 1\}\setminus \{t - \size(x_j) \mid j \in [m]\}$, 
we have $f_t(j^{\star}) - 2 = j^{\star}$,
more specifically $f_t(j' - 1) - 2 = (j' - 1)$.
As $s_x$ forwards the message towards $\gamma_{j'}$ in the next round,
we have $f_t(j') = f_t(j'-1) + 1$, and hence 
$f_t(j') = j' + 1$.
As next consecutive rounds, $s_x$ sends message towards $\gamma$-type
vertices, we have $f_t(j^{\circ}) - 1 = j^{\circ}$
for every $j^{\circ} \in \{j', j'+1, \dots, j''\}$.
As $j''$ is largest such index, at time $j'' + 1$, the $s_x$ forwards
the message to $\alpha_i$ for some $i \in [m]$.

We consider the following modification to $C(s_x)$.
The $\alpha_i$ is removed from the ordering and inserted
before $j'$. 
The rest of the ordering remains the same.
We argue that this ordering also corresponds to a valid
broadcasting protocol.
With this modification, if in the earlier protocol
$s_x$ forwarded message towards $\gamma_{j^{\circ}}$ 
at time $t_1$, then now it forwards the message 
towards $\gamma_{j^{\circ}}$ at time $t_1 + 1$.
Hence, in this new protocol, $f_t(j^{\circ}) - 2 = j^{\circ}$
This implies vertex $\gamma_{j^{\circ}}$ for every 
$j^{\circ} \in \{j', j'+1, \dots, j''\}$ receives the message in $t$ round.
As the message is send to $\alpha_i$ earlier than before,
any vertex in the path $\alpha_i$ to $\beta_i$ will receive the message
even in this new protocol.
As the remaining order remains unchanged, the modification
results in valid broadcasting protocol.
Repeating this process, we get a new protocol 
in which for every $j' \in \{1, 2, \dots, t-2\} \setminus 
\{t - \size(x_j) \mid j \in [m]\}$,
the vertex $s_x$ forwards the message to one of the 
$\gamma$-type vertex in round $j' + 2$.
As $s_x$ has $(t - 2)$ many children and it receives the message
by the end of the second round,
for every time $j' \in \{t - \size(x_j) \mid j \in [m]\}$,
it sends the message to $\alpha$-type vertex in round $j' + 2$.

Using the similar arguments, for every $k' \in \{1, 2, \dots, t-2\} \setminus 
\{t - \size(x_j) \mid j \in [m]\}$,
the vertex $s_y$ forwards the message to one of the 
$\gamma$-type vertex in round $k' + 2$ and 
for every time $k' \in \{t - \size(x_j) \mid j \in [m]\}$,
it sends the message to $\alpha$-type vertex in round $k'+2$.

Now, suppose vertex $\alpha_i$ gets the message at time $(t-\size(x_j))$
and $\beta_i$ gets the message at time $(t - \size(y_k))$.
These two vertices forward the message towards each other 
to convey it to all the vertices in the path connecting them.
As all the vertices in the path will be informed, we have
$T - \lambda - \size(w_i) - 2 \le  (\size(x_j) - 1) + (\size(y_k) - 1)$ 
which implies 
$T - \lambda \le \size(w_i) + \size(x_j) + \size(y_k)$.
As this is true for every $i \in [m]$, 
this implies 
a solution of \textsc{Telephone Broadcasting} and that of
\textsc{Numerical $3$-Dimensional (Almost) Matching}.
This concludes the proof of the lemma.
\end{proof}

Note that, as \textsc{Numerical $3$-Dimensional (Almost) Matching}
is strongly \NP-complete, the reduction is completed in the time polynomial
in the size of input. 
Lemma~\ref{lemma:bounded-fvs-np-hard-forword} and Lemma~\ref{lemma:bounded-fvs-np-hard-backward} imply
the correctness of the reduction.
This along with the fact that deleting $s_x$ (or $s_y$) removes all the 
cycle from the graph prove Theorem~\ref{thm:fvs-np-hard}.
Also, deleting both $s_x$ and $s_y$ results in the collection of paths.
For Thorem~\ref{thm:tree-depth-eth}, note that the 
treedepth of the path of length $q$ is $\calO(\log(q))$.
This, along with the arguments presented in the proof of 
Theorem~\ref{thm:sol-size-lb}, prove Theorem~\ref{thm:tree-depth-eth}.

\section{Conclusion}
\label{sec:conclusion}
In this article, we studied the \textsc{Telephone Broadcast} problem 
and answered two open question by Fomin et al.~\cite{DBLP:conf/wg/FominFG23}.
Our reductions results in somewhat rare results.
First, the problem admits a double exponential lower bound 
when parameterized by the solution size under the \ETH.
Second, we prove that the problem is \NP-\complete\ even
on graphs of feedback vertex set number one, and hence
on graphs of tree-width at most two.
Hence, this result proves very tight polynomial vs \NP-\complete\ 
dichotomy separating treewidth one from treewidth two, which is a rare phenomenon.
The same reduction also implies that the problem does not
admit an algorithm running in time $2^{2^{o(\td)}}\cdot n^{\calO(1)}$
unless the \ETH\ fails.

Fomin et al.~\cite{DBLP:conf/wg/FominFG23} presented the algorithm
that runs in time $2^{2^{\calO(\vc)}}\cdot n^{\calO(1)}$.
Note that tree-depth is smaller parameter than vertex cover.
Hence, we have a lower bound with respect to the tree-width 
but upper bound with respect to the vertex cover number.
It would be interesting to close this gap either by improving the 
algorithm or tightening the lower bound.

\section*{Addendum}
After sharing the manuscript, Prof. Dr. Petr Golovach pointed
out that Papadimitriou and Yannakakis~\cite{DBLP:journals/jacm/PapadimitriouY82} mentioned that 
the problem of determining whether a graph contains a binomial
spanning tree is \NP-\complete~(Theorem~8). 
Consider a reduction that given an instance $(G)$,
a graph on $n$ vertices, of this problem,
constructs an instance $(G', s, t)$ of \textsc{Telephone Broadcast}
by adding a global vertex $s$ to $G$ and making it adjacent with
$t - 1$ pendant vertices, where $t = \calO(\log(n))$.
Hence, the \textsc{Telephone Broadcast} problem is \NP-\complete\
even when $|V(G')| = \calO(2^t)$.
As the original problem is \NP-\complete,
this implies a stronger statement than Theorem~\ref{thm:sol-size-lb}. 
 



\bibliography{references.bib}

\end{document}